\newtheorem{assumption}{Assumption}
\newtheorem{theorem}{Theorem}
\begin{document}
%
\title{Federated Graph  Learning  for  EV Charging Demand Forecasting with Personalization\\ Against Cyberattacks}
%
%
%


\author{Yi Li,
        Renyou Xie,
        Chaojie Li,
        Yi Wang,
        Zhaoyang Dong
\thanks{
This work was jointly supported by the National Natural Science Foundation of China under Grant U24B2080 and the National Natural ScienceFoundation of China under Grant 52577107. Yi Li's work is supported by the China Scholarship Council program (Project ID:202306370280).

Y. Li is with the College of Electronic and Information Engineering, Southwest University, Email: ylic204@163.com.

R. Xie is with the School of Electrical Engineering and Telecommunications, the University of New South Wales, Sydney, 2052, E-mail: renyou.xie@unsw.edu.au.


C. Li and Z. Y. Dong are with the Department of Electrical Engineering, City University of Hong Kong, Hong Kong.
E-mail: cjlee.cqu@163.com (C. Li); zydong@ieee.org (Z. Y. Dong).

Y. Wang is with the Department of Electrical and Electronic Engineering, the University of Hong Kong, Hong Kong, E-mail:yiwang@eee.hku.hk.
}}
%
%

\markboth{Submitted to IEEE Transactions on Intelligent Transportation Systems}%
{Shell \MakeLowercase{\textit{et al.}}: A Sample Article Using IEEEtran.cls for IEEE Journals}
%



\maketitle

\begin{abstract}
Mitigating cybersecurity risk in electric vehicle (EV) charging demand forecasting plays a crucial role in the safe operation of collective EV chargings, the stability of the power grid, and the cost-effective infrastructure expansion. However, existing methods either suffer from the data privacy issue and the susceptibility to cyberattacks or fail to consider the spatial correlation among different stations. To address these challenges, a federated graph learning approach involving multiple charging stations is proposed to collaboratively train a more generalized deep learning model for demand forecasting while capturing spatial correlations among various stations and enhancing robustness against potential attacks. Firstly, for better model performance, a spatial-temporal Graph Neural Network (GNN) model is leveraged to characterize the geographic correlation among different charging stations in a federated manner. Secondly, to ensure {\color{black} robust aggregation} and deal with the data heterogeneity in a federated setting, a message passing that utilizes a global attention mechanism to aggregate personalized models for each client is proposed. Thirdly, by concerning cyberattacks, a special credit-based function is designed to mitigate potential threats from malicious clients or unwanted attacks. Extensive experiments on \textcolor{black}{three} public EV charging datasets are conducted using various deep learning techniques and federated learning methods to demonstrate the prediction accuracy and {\color{black} robustness against cyberattack} of the proposed approach.
\end{abstract}

\begin{IEEEkeywords}
EV Charging demand forecast, cybersecurity, personalized federated learning, graph neural networks, malicious attack.
\end{IEEEkeywords}

%
\IEEEpeerreviewmaketitle

\section{Introduction}\label{section1}
\IEEEPARstart{A}{s} \textcolor{black}{Intelligent Transportation Systems (ITS) evolve toward hyper-connectivity and deep electrification, the interaction between vehicular traffic and energy infrastructure is being fundamentally redefined.} The growing adoption of electric vehicles (EVs) by both individuals and businesses has raised concerns about the potential strain on the grid and charging infrastructure. 
Accurate demand forecasts are essential for sustainable operations of the electric vehicle - charging station - power grid ecosystem, which, in turn, facilitates the electrification and decarbonization of the transportation sector \cite{acharya2022false}. 

{\color{black}
\subsection{ EV Charging Demand Forecasting Challenges}


\textcolor{black}{The proliferation of sensors and communication technologies within ITS has enabled the collection of high-fidelity, real-time mobility data. Leveraging this data-driven paradigm, extensive research has been conducted on EV charging demand forecasting by aggregating large-scale charging records to train deep learning models in a centralized way~\cite{10159556,yi2022electric, ma2022multistep}.} However, this centralized training paradigm faces significant challenges:

\textit{(i) Data privacy challenges}.
Charging data collected from individual EV stations presents dual privacy concerns. From a business perspective, station-level demand patterns represent proprietary intelligence that operators are often contractually prohibited from sharing. From an individual privacy perspective, these station-level aggregates derive from charging sessions containing personal data (timestamps, location), which could potentially reveal individual user patterns, especially at smaller stations. Transmitting such data to a central server creates privacy vulnerabilities that could violate data protection regulations and undermine consumer trust.

\textit{(ii)  Data heterogeneity challenges.} Unlike traditional fuel stations that typically adhere to uniform consumption patterns, EV charging stations exhibit substantial variations influenced by geographic location, infrastructure availability, charging speed, and user preferences. This heterogeneity complicates the development of a single forecasting model that can effectively serve all stations.

\textit{(iii) Cybersecurity challenges}.
Training a deep model to acquire information from IoT devices over a communication system embedded within an EV charging station is susceptible to cyberattacks \cite{10149139attack}, such as False Data Injection (FDI) \cite{9774855}. 
Conventional defense mechanisms are often insufficient, as high-power demand-side devices like EV charging stations are not continuously monitored by grid operators.

{\color{black} \subsection{Research Gap}

Addressing these challenges requires satisfying three fundamental requirements simultaneously: privacy preservation, spatial-temporal correlation modeling, and robustness against attacks. However, these requirements create inherent conflicts.

\textbf{The first conflict: Privacy vs. Spatial Correlation.}
Centralized training can effectively capture spatial-temporal correlations among charging stations through graph-structured representations and neighborhood aggregation \cite{yu2017spatio, 10230996, wang2023predicting}. However, this fundamentally requires sharing raw charging data, violating privacy constraints. Conversely, distributed training paradigms such as federated learning \cite{McMahan2017CommunicationEfficientLO, saputra2019energy} can preserve privacy by keeping data local, but when each station trains independently, the spatial dependencies among geographically distributed stations—such as correlated demand patterns between nearby locations or stations in similar urban contexts—cannot be exploited. Recent federated graph learning approaches \cite{meng2021cross, c2} attempt to bridge this gap but face challenges in balancing information sharing and privacy protection. \textit{How can we model spatial correlations that require cross-node information exchange while maintaining strict data privacy?}

\textbf{The second conflict: Robustness vs. Personalization.}
Defending against Byzantine attacks in federated settings typically requires detecting outliers that deviate from a consensus model \cite{yin2018byzantine, pillutla2022robust}. However, EV charging stations exhibit substantial data heterogeneity due to diverse geographical locations, user behaviors, and infrastructure types. A well-functioning personalized system \cite{9699080, c8} should produce different models for different stations, making it difficult to distinguish between legitimate heterogeneity and malicious attacks. Existing Byzantine-robust methods \cite{blanchard2017machine, NEURIPS2024_bcbdc25d} focus on consensus models, while personalized methods typically use averaging-based aggregation vulnerable to attacks \cite{lyu2022privacy}. \textit{How can we achieve Byzantine robustness when there is no single consensus to compare against?}

\textbf{The third conflict: Distributed GNN vs. Attack Detection.}
Even with distributed graph neural network training that captures spatial correlations while preserving privacy, the resulting personalized models create new vulnerabilities. In a graph-based personalized federated learning framework, each client legitimately receives different parameter updates based on both its local data distribution and its position in the spatial graph. This legitimate variability significantly complicates outlier detection—an anomalous parameter vector could indicate either a malicious attack or simply a station with unique spatial context. Traditional Byzantine detection methods designed for homogeneous parameter spaces \cite{pillutla2022robust, 10018261} become ineffective in such heterogeneous settings. \textit{How can we design aggregation mechanisms that distinguish between these two scenarios?}

Existing approaches can satisfy at most two of these three requirements. Centralized GNN methods \cite{10230996, wang2023predicting} capture spatial correlations but violate privacy; standard federated learning \cite{saputra2019energy, c5} preserves privacy but ignores spatial structure; Byzantine-robust federated methods \cite{blanchard2017machine, pillutla2022robust} provide security but sacrifice personalization or fail to model spatial dependencies. No existing framework simultaneously addresses all three requirements for EV charging demand forecasting.

\subsection{Contributions}
To address the aforementioned trilemma, this paper proposes a Personalized Federated Graph Learning (PFGL) framework with three key innovations. First, we introduce a decoupled GNN architecture that separates local temporal encoding from server-side spatial aggregation on encrypted hidden representations, enabling privacy-preserving spatial correlation modeling. Second, we design a dual-weighted aggregation mechanism that combines spatial proximity and model similarity to enable personalized models while respecting spatial dependencies. Third, we propose a credit-based adaptive weighting function that achieves Byzantine robustness without sacrificing personalization by automatically downweighting anomalous updates while preserving legitimate heterogeneity.
The main contributions of this paper are summarized as follows:
\begin{enumerate}
        \item \textbf{A privacy-preserving federated graph learning framework.} We propose a novel framework that integrates spatial-temporal graph neural networks with federated learning for EV charging demand forecasting. Through a decoupled architecture that separates local temporal encoding from server-side spatial aggregation, the framework enables stations to collaboratively capture spatial-temporal patterns and inter-station correlations while protecting data privacy. This resolves the fundamental conflict between privacy preservation and spatial correlation modeling.
    
    \item \textbf{A dual-weighted personalized aggregation mechanism.} We design a message passing aggregation method that combines spatial proximity and model similarity to encourage stations with similar data distributions and geographical contexts to collaborate more effectively. This enables personalized forecasting models tailored for each station while respecting spatial dependencies. Mathematical convergence guarantees are provided under non-IID settings.
    
    \item \textbf{A credit-based Byzantine-robust aggregation method.} We propose an adaptive credit-based mechanism that ensures robustness against malicious attacks while maintaining personalization capabilities. The mechanism allows quantitative description of trust relationships among stations, where aggregation weights are tuned adaptively based on model similarity and credit parameters. This achieves the rare combination of Byzantine robustness and personalization, addressing the conflict between security and heterogeneity adaptation.
\end{enumerate}
}

The rest of this paper is organized as follows: \textcolor{black}{Section \ref{Related work} reviews related work in federated learning, graph neural networks, and byzantine-robust methods for EV charging demand forecasting;} Section \ref{section2} formulates the problem to be studied; Section \ref{section3} provides the overall framework of the proposed personalized federated graph learning and analyzes the convergence and robustness of the method; Section \ref{section4} conducts comparative experiments and analysis on the Palo Alto dataset; \textcolor{black}{Section \ref{discussion} discusses the practical applications and limitations of this study;} Section \ref{section5} draws the conclusions. {\color{black}
A summary of basic concept and notations is provided in Table \ref{table:summary of notations}.}

\section{\textcolor{black}{Related work}}
\label{Related work}

\textcolor{black}{In this section, related challenges together with solutions
for federated learning, graph neural
networks, and byzantine-robust methods for EV charging
demand forecasting are summarized.}

\subsection{\textcolor{black}{Federated Learning Method for EV Forecasting}}

\textcolor{black}{It is widely acknowledged that Federated learning (FL) can harness the information among different clients by sharing the model rather than the data, which provides a secure and private way for different stations/operators to collaboratively train EV charging demand forecasting models \cite{McMahan2017CommunicationEfficientLO}. 
In the EV demand forecasting area, few papers have adopted the federated learning technique to improve their forecasting performance. In \cite{saputra2019energy}, vanilla FedAvg is first applied to the energy demand forecasting for EV network, while federated cluster learning is further proposed to improve the training efficiency and personalized performance. Although clustering helps to improve personalized performance, the inter-cluster feature is not learned with the framework, which deteriorates the generalized performance of the global model in each cluster.}

\textcolor{black}{Similarly, in \cite{c5}, a blockchain-based hierarchical federated learning is proposed for EV charging demand forecasting, while the clustering method is used to get a personalized model for each cluster. However, cybersecurity and the correlated relationship between and inside the clusters are not considered. To protect data privacy against different types of cyberattacks for EV energy-demanding forecasting, a secure FL is proposed in \cite{9555212}, where a double authentication process is designed to avoid the attack model from the client or server. Meanwhile, \cite{c6} proposes a local differential privacy-based approach with elastic synchronization to enhance prediction accuracy while preserving privacy against honest-but-curious servers.
Despite being effective against the cyberattack, the global model is obtained by ``averaging," which fails to address the data heterogeneity. }

\subsection{\textcolor{black}{Personalized Federated Learning for EV Forecasting}}

\textcolor{black}{The data heterogeneity challenge remains a significant obstacle in federated learning frameworks. Many \textcolor{black}{personalized federated learning (PFL)}  techniques have been proposed to address this issue, including local tuning, model interpolation, clustering, multi-task learning, and transfer learning \cite{9699080,10122655,9743558, NIPS2017_6211080f, 10646558}. Although clustering methods help improve personalized performance, as demonstrated in \cite{saputra2019energy}, they often fail to learn inter-cluster features, which deteriorates the generalized performance of the global model in each cluster \cite{10646558}. 
Alternatively, PFedRe \cite{c8} tackles data heterogeneity by filtering clients based on data distribution similarity. It measures distributional divergence between clients (e.g., via model inversion and Wasserstein distance) and selectively aggregates only those found to be similar. This filtering helps shield the global model from the negative impacts of aggregating highly heterogeneous data.
Additionally, the correlated relationships between and inside clusters are rarely considered in existing approaches like \cite{c5}. }
\begin{figure}
    \centering
    \includegraphics[width=0.8\linewidth]{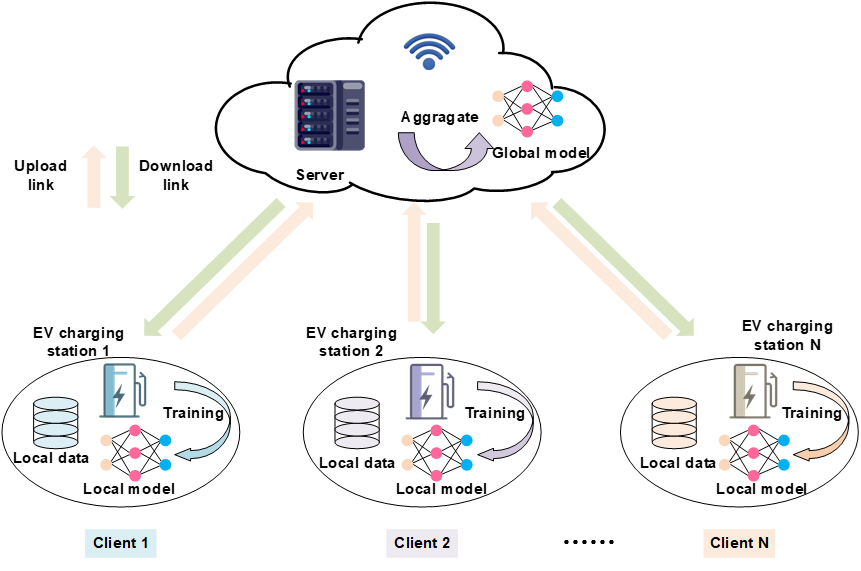}
    \caption{\textcolor{black}{The workflow of personalized federated learning}}
    \label{fig:workflow}
\end{figure}

\subsection{\textcolor{black}{Graph Neural Networks}}
\textcolor{black}{
Existing personalized federated learning methods mainly address data heterogeneity from a statistical perspective, while largely ignoring the intrinsic spatial characteristics of EV charging data. In urban environments, charging stations exhibit strong spatial and temporal correlations, which can be naturally modeled by graph neural networks (GNNs). 
 GNNs are a promising paradigm that has been applied to capture deep patterns and handle dependencies among time series across different problems in different fields, including traffic forecasting \cite{yu2017spatio, 10486198}, electricity price forecasting \cite{10693871}.
Recent studies \cite{10230996,FB2021Deep,wang2023predicting,c1,c3,c7} have integrated GNNs into EV charging demand forecasting to incorporate both spatial and temporal features, demonstrating improved generalization by leveraging data from multiple stations. However, conventional GNN training inherently relies on cross-node data interactions (e.g., neighborhood feature aggregation), which raises significant data privacy concerns due to potential exposure of sensitive node attributes or topological relationships. To mitigate these risks, federated graph learning (FGL) has emerged.
In \cite{meng2021cross}, a federated spatial-temporal model CNFGNN was proposed to effectively model complex spatial-temporal dependencies across network nodes by modeling spatial dynamics in server while temporal dynamic in edge with the aid of split learning.  It demonstrated the potential of combining GNNs with federated learning for distributed forecasting tasks. FedStar \cite{tan2023federated} is proposed to leverage the common structure information in different domains. In the EV charging demand area,
FMGCN \cite{c2} integrates spatial-temporal attention mechanisms with federated meta-learning to improve adaptability and generalization across diverse regions.} 

\textcolor{black}{Despite the federated graph learning help better personalized performance while efficient leverage the structure information, they are not immune to security vulnerabilities, requiring careful consideration of potential adversarial manipulations that could compromise forecast accuracy.
 }

\subsection{\textcolor{black}{Byzantine Robust Federated Learning Method}}

\textcolor{black}{While there are methods that address the data privacy and data heterogeneity for EV charging forecasting using the personalized FL, it still faces the risk of malicious attack on the local models when transmitting to server. 
In such cases, the commonly used ``averaging" aggregation becomes susceptible to arbitrary manipulation even if only one local model is compromised \cite{yin2018byzantine}. 
In the federated learning based EV forecasting, there is still no method that addresses this problem. In federated learning field, there are several methods that work against attacks. Typical Byzantine-robust methods classify malicious models through clustering or geometric median approaches, which can be effective against certain attacks \cite{pillutla2022robust}, \cite{10018261}. Similar robust federated learning methods including \cite{blanchard2017machine}, \cite{NEURIPS2024_bcbdc25d}. However, these methods typically train a shared model for all clients, which degrades performance when data at different stations are not independent and identically distributed (non-i.i.d). On the other hand, personalized methods address the heterogeneity issue but they incorporate ``averaging" aggregation in the server, making them vulnerable to various threat models. Few methods can achieve both personalization and robust against attacks at the same time. In \cite{lyu2022privacy}, the robustness against cyberattack and personalization is achieved simultaneously by adding a penalization term in local updating. Nevertheless, the global aggregation is still ``averaging", leading to completely separate local updating even if only one client is malicious.}

}

\vspace{-0.4cm}
\begin{table}[!h]
    {\color{black}
	\caption{Summary of main notations}
    \vspace{-0.2cm}
	\label{table:summary of notations}
    \scalebox{1}{
	\resizebox{\columnwidth}{!}{
		\begin{tabular}{l l}
			\hline \\[-3mm]
            $N$ &The total number of clients\\
            $\sigma$ & Activation function for nural networks\\
            $\alpha$ & A coefficient with range $[0,1]$ \\
            $\beta$ & Penalized coefficient for local model updating \\
            $\delta_H$ & variance of hidden representation \\
            $\rho_h$ & Lipschitz continuous with constant \\
             $t$ & Time step of forecasting model\\
			 $\tau$ & The index of $\tau$-th aggregation for FL\\
			 $\chi_i$ & A credit parameter for $i-$th client \\
             $\eta_{\tau}$ & Local learning rate.\\
			 $T$ & The number of aggregation times/ communication rounds\\
			 $K$ & The number of local update between global aggregation\\
            $V_S$ & A set of stations \\
            $V_A$ & A set of station's models \\
            $\epsilon_S$ & Edges of a graph \\
            $\epsilon_A$ & Edges of a station's model graph \\
            $\boldsymbol{\zeta}$ & Collection of the spatial weighted adjacency matrix $\{\zeta_{i,j}\}\in R^{N\times N}$  \\
            ${\boldsymbol{\xi}}$ & Collection of the similarity weighted matrix $\{{\xi}_{i,j}\}\in R^{N\times N}$  \\
            ${\boldsymbol{\lambda}}$ & Collection of the personalized weight matrix $\{{\lambda}_{i,j}\}\in R^{N\times N}$  \\
			 $\mathcal{D}_i$ & Dataset of $i$-th clients \\ 	
			 $\bold{w}$ & Global model parameter\\
			 $\bold{w}_i$ & Model parameter of $i-$th clients \\
             $\bold{w}_i^G(\tau)$ & Global model for $i-$th client at $\tau$-th aggregation.\\
              $\bold{w}_i(\tau)$ & Local model for $i-$th client at $\tau$-th aggregation.\\
            
             $f_i(\bold{w})$ & Objective function of $i$-th clients\\
             $\exp(\cdot)$ & Exponential function \\
			 $\mathcal{A}(\cdot)$ & An attention-based function \\
             $\mathcal{S}(\cdot)$ & An similarity-based function \\
			\hline
		\end{tabular}
		}
        }}
\end{table}

\section{Problem Formulation}
\label{section2}
\subsection{EV Charging Demand Forecasting}
\textcolor{black}{EV charging demand refers to the aggregated electrical energy consumption (kWh) recorded at each charging station within a given time interval, representing the total power requested by connected electric vehicles.}
The goal is to predict the future EV charging demand of $N$ charging stations in an area. We use historical demand from $N$ charging stations and calendar variables as input features. Specifically, the calendar variables refer to hour of the day, day of the week, day of the month, day of the year, month of the year, and the year. The demand observation of $i$-th charging station at time step $t$ is denoted as $x_{i}^{t}$. The corresponding calendar variables of the charging station $i$ are denoted as $z_{i}^{t}$. For convenience, we use $X_{i}^{t}$ denotes $\{x_{i}^{t}, z_{i}^{t}\}$, and $\textbf{\textit{X}}^{h}_{i}$ denotes $\{X_{i}^{t+1},\cdots,X_{i}^{t+h}\}$. The demand observation series of $p$ time steps of $i$-th station is denoted as $\textbf{\textit{Y}}^{p}_{i}=\{y_{i}^{t+h+1},\cdots,y_{i}^{t+h+p}\}$. Similarly, the demand forecast series of $p$ time steps of $i$-th station is denoted by $\hat{\textbf{\textit{Y}}}^{p}_{i}=\{\hat{y}_{i}^{t+h+1},\cdots, \hat{y}_{i}^{t+h+p}\}$. 
 Notably, the quantile loss is used for training a probabilistic forecasting model. Quantile regression models estimate the conditional quantile $q$ of the target distribution $y$, i.e., minimizing the quantile loss:
\begin{align}\label{eqn:quantile}
    L_q(y,\hat{y})= q(y-\hat{y})_+ + (1-q)(\hat{y}-y)_+,
\end{align}
where $\hat{y}$ is the predicted value, and $(y)_+=\max(0,y)$. 
In this paper, the multi-output quantile regression neural network (MQNN) is adopted to improve the computation efficiency by training a model that fits all the quantiles. Suppose $\mathcal{Q}$ is the number of quantiles, then the $\mathcal{Q}\times p$ matrix $[\hat{\textbf{\textit{Y}}}_{i}^{p(q)}]_{p,q}$ will be the output of  the MQNN of station $i$; MQNN will find the model $\bold{w}_i$ that minimizes the summed quantile loss of different quantiles, i.e.,
$\mathop{\arg\min}_{\bold{w}_i} \{\sum_{q}\sum_p L_q(y_i^p-\hat{y}_i^p)|\bold{w}_i]\}$.

Additionally, to simultaneously consider the spatial-temporal correlations and enhance the generalization of our model, we go a step further by incorporating the correlation between charging station demands. This is achieved through spatial-temporal graph neural networks. Specifically, we have constructed two graphs to capture both spatial-temporal correlations and the model similarity among stations. The first spatial graph, $G_{S}=(V_{S},\epsilon_{S},\boldsymbol{\zeta})$, is designed to capture the spatial and temporal correlations, where $V_{S}$ represents the set of stations, $\epsilon_{S}$ denotes the edges, and $\boldsymbol{\zeta}=\{\zeta_{i,j}\}\in R^{N\times N}$ serves as the \textcolor{black}{spatial} weighted adjacency matrix, which is defined based on the distance. To be specifically, if the distance $d_{ij}$ between stations $i$, $j$ is below a certain threshold, $\zeta_{i,j} = 1$, which means $(V_{i}, V_{j})\in\epsilon_{S}$, otherwise $\zeta_{i,j} = 0$, $(V_{i}, V_{j})\notin\epsilon_{S}$.
The other graph is designed to capture the similarity of the forecasting model in different stations, which is denoted as the model similarity graph, $G_{A}=(V_{A},\epsilon_{A},\boldsymbol{\xi})$.  $V_{A}$ encompasses the set of station's models, $\epsilon_{A}$ signifies the edges, and $\boldsymbol{\xi}=\{\xi_{i,j}\}\in R^{N\times N}$ serves as the \textcolor{black}{similarity} weighted adjacency matrix, which is calculated based on the similarity between the models. 
In this study, we leverage both spatial connections and similarities among stations' models to encapsulate the relationships between  stations. \textcolor{black}{The proposed spatial and similarity graphs are constructed in a region-aware manner. When the studied area changes, the node set and adjacency matrices are rebuilt based on the local stations and their geographical and model-level relationships.}  
Consequently, the EV charging demand forecasting is modeled as:
\begin{equation}
    \{\textbf{\textit{X}}^{h}_{1}, \cdots,\textbf{\textit{X}}^{h}_{N};G_{A},G_{S}\}\xrightarrow{\mathcal{F}(\cdot)} \{\hat{\textbf{\textit{Y}}}^{p}_{1}\cdots,\hat{\textbf{\textit{Y}}}^{p}_{N}\},
\end{equation}
\begin{equation}
    \begin{split}\label{eqn:graph}
        \underset {\textbf{w}_{i}} { \operatorname {arg\,min} }\, &\{\sum_{q}\sum_p L_q(y_i^p-\hat{y}_i^p) | \textbf{w}_{i},G_{A},G_{s}\}.
    \end{split}
\end{equation}
\subsection{New FL for Charging Demand Forecasting}
Due to privacy concerns, the historical charging data or model $\textbf{w}_{i}$ in different stations will not be shared with other stations directly. To realize (Eq. \eqref{eqn:graph}) without sharing the embedding between stations, FL with a central server is applied. 
{\color{black}Commonly used FL, such as FedAvg \cite{McMahan2017CommunicationEfficientLO}, aims to find a common model for all the clients} (refer to stations in this paper)  by solving the following optimization problem:
\begin{align}
    \min_{\bold{w}}\sum_{i\in[N]}  p_i f_i(\bold{w};\mathcal{D}_i),
\end{align}
where $f_i(\cdot)$, $\mathcal{D}_i$ are the loss function of the forecasting model and the dataset of client $i$; $p_i={|\mathcal{D}_i|}/{\sum_{i\in[N]} |\mathcal{D}_i|}$ is the weight attributed to client $i$ based on clients' dataset, and $[N]$ represents the total clients in the federated learning. 

In FL framework, the server will first send a global model to all participated clients, and each client will update their model based on the local data; after that, clients will upload their model or gradient to the server, while the server will aggregate a new global model with the collected model/gradient and use it as the initial model of the next round \cite{McMahan2017CommunicationEfficientLO}.

Due to the difference between the habits of the EV owners and the spatial distribution of the charging stations, the charging demand data in different stations vary. In such a case, we argue that an identical model for all stations may not be suitable, and personalized models by which both the common features among stations and individual features can be learned are required. A typical way to obtain a personalized model is to update the local model with local data while using the global model as a penalized term to ensure the global information is learned by the local model \cite{NEURIPS2020_f4f1f13c}, i.e.
\begin{align}\label{eqn:pfl_1}
\begin{cases}
\min_{\bold{w}_i} f_i(\bold{w}_i) + \frac{\beta_i}{2}||\bold{w}_i- \bold{w}||^2   \\
\text{s.t.} \bold{w} = \sum_{i\in [N]} p_i \bold{w}_i ,
\end{cases}
\end{align}
where $\beta_i$ is the penalization coefficient. Such a method can get an effective personalized model, while the global model is obtained using ``averaging" aggregation, which is vulnerable to malicious attacks. Once one of the clients sends the false model information to the server, the aggregated global model will be poisoned, and thus lead to a worse performance of the local model. {\color{black} In addition, the spatial correlation is not considered in Eq. \eqref{eqn:pfl_1}. Therefore, to improve the robustness of the global model to work against malicious model attacks while consider the spatial relationship among clients, a new aggregation rule is required, and Eq. \eqref{eqn:pfl_1} can be rewritten as:

\begin{align}\label{eqn:pfl_problem}
\begin{cases}
\min_{W} \bigl\{ \sum_{i\in[N]} \bigl(f_i(\bold{w}_i, H_i^D) + \frac{\beta_i}{2}||\bold{w}_i-\sum_{j\in [N]} \lambda_{i,j} \bold{w}_j||^2\bigl)  \bigl\} \\
\text{s.t.} \quad \lambda_{i,j}=\alpha \zeta_{i,j} + (1-\alpha) \xi_{i,j},
\end{cases}
\end{align}
 where $W=\{\bold{w}_1, \bold{w}_2,..., \bold{w}_N\}$; $H_i^D$ is hidden representation that coupled in GNN, which will be discussed in Section \ref{section3}.
 $\boldsymbol{\lambda}=\{\lambda_{i,j}\} \in R^{N \times N}$ is the personalized weight matrix that combines the spatial and similarity weighted adjacency matrix. 
And it is the final weight that used to get the global model for each client. \textcolor{black}{The abstract workflow of new FL is presented in Fig. \ref{fig:workflow}.} The key part is to get $\zeta_{i,j}$  as well as $\xi_{i,j}$, which is described in Section \ref{section3}.

\section{Proposed Personalized Federated Graph Learning Method}
\label{section3}

The framework of the proposed method is illustrated in Fig. \ref{fig:framework}, which involves the following steps, each marked with a corresponding stage number in the figure for clarity:
\begin{enumerate}
    \item Encoder Feature Extraction: Each EV charging station uses a 1D Conv Encoder for feature extraction. The extracted features \textcolor{black}{$H_{i}^{h}=Conv1d(\textbf{\textit{X}}^{h}_{i})$} are then sent to the central server.
    \item Spatial-Temporal Attention Graph Convolution: The server employs spatial-temporal attention graph convolution networks to model spatial-temporal dependencies, creating the spatial-temporal feature $H^{{d}}_{i}$.
    \item Global Attention-based Message Passing: The local model parameters are uploaded to the server. The server will aggregate different global models for different clients based on the model similarities and spatial relationships among clients, which achieves message passing between different clients. After that, the server will send the aggregated global models to corresponding clients for local training.
    \item \textcolor{black}{Decoder Concatenate and Forecasting:} The output features of spatial-temporal graph convolution networks are downloaded by each client and then concat with the initial feature $H^{h}_{i}$, forming $H^{D}_{i}$ as inputs to Forecasting module which consists of 1D Convs and Linear networks.
    \item Local Training: Each client will update the local model with the received global model and local data by solving problem Eq. \eqref{eqn:local_optimization}.  
\end{enumerate}
\begin{figure}
    \centering
    \includegraphics[width=9cm,height=6.8cm]
    {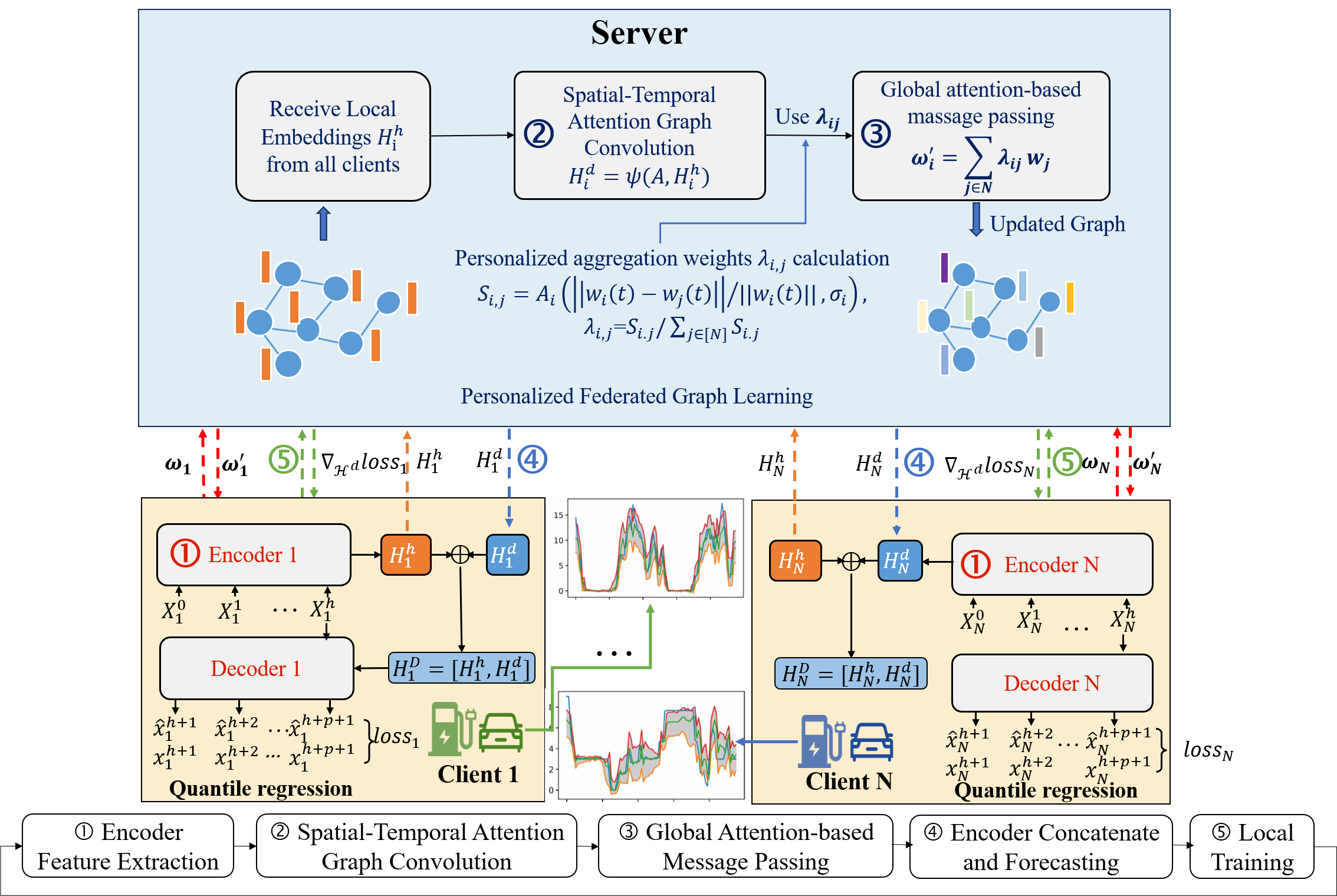}
    \setlength{\abovecaptionskip}{-0.4cm}
    \caption{\textcolor{black}{The proposed personalized federated graph learning framework of EV charging demand forecasting. First, each EV charging station model will search the model architecture locally for several epochs in the pre-search phase. Then, the architecture parameters of the model should be uploaded to the server. The models in the server can be formed as weighted graphs. Based on the similarity of spatial and model architectures, each model can acquire knowledge from other models through message passing. Then, each model can coordinate the training process and aggregate models from other distributed models for updates. Finally, the personalized models will be fine-tuned using local data to create personalized models for each individual EV charging station.}}
    \label{fig:framework}
\end{figure}

In the following, we will describe each of the above steps in detail, where 1, 2, and 3 correspond to subsection A, and 4 and 5 correspond to subsection B. Additionally, we analyzed the convergence and robustness of personalized federated learning.
\subsection{\textcolor{black}{Spatial-Temporal Attention Graph Convolution Networks (STAGCN)}}
\textcolor{black}{In this work, we propose a novel spatial-temporal attention graph convolutional network (STAGCN) to effectively model the complex spatial and temporal dependencies inherent in EV charging data. The core of our approach is a newly designed spatial-temporal attention message passing mechanism.
As illustrated in Fig. \ref{fig:framework}(a), we first apply a one-dimensional convolutional (1D Conv) layer to the input EV charging time-series data to extract high-level temporal features, denoted as $H^{h}_{i}$. These features are then passed through a STAGCN. In our framework, each client (i.e., an EV charging station) is naturally represented as a node in the graph.
The spatial module captures structural dependencies by aggregating information from neighboring clients in the graph, while the temporal module employs convolutional operations along the time axis to model temporal dynamics.} 

\textcolor{black}{We now introduce the temporal attention mechanism and detail the spatial-temporal attention graph convolution process, including the proposed spatial-temporal attention message passing strategy. This two-stage modeling framework is specifically designed to capture both temporal dependencies and spatial correlations in a unified and efficient manner, making it particularly suitable for personalized federated learning in EV charging demand forecasting.}
\textcolor{black}{\textbf{1. Temporal Attention Mechanism}}

\textcolor{black}{We first introduce a Temporal Attention Layer (TAt) to compute dynamic attention weights among different time steps. Specifically, the temporal attention can be calculated as:}

\textcolor{black}{\begin{equation}
        E = V_{e}\cdot\sigma(((\mathcal{H}^{(l)})^{T}U_{1})U_{2}(U_{3}\mathcal{H}^{(l)})+b_{e}),
    \end{equation}
    \begin{equation}
        \overline{E}_{ij} = \frac{\exp (E_{ij})}{\sum^{{d}_{l}}_{j=1}\exp{E_{ij}}},
    \end{equation}}
\textcolor{black}{where $\mathcal{H}^{(l)}=\{H^{(l)}_{1},\cdots,H^{(l)}_{N}\}\in R^{N\times f_{l}\times d_{l}}$ denotes the input of the $l+1$-th spatial-temporal attention graph neural networks. $d_{l}$ and $f_{l}$ denote the input data embedding dimension and input feature dimension of the $l$-th layer, when $l=0$, $d_{l}=h$, $H^{(l)}_{i} = H^{h}_{i}$. $V_{e}, b_{e}\in R^{d_{l}\times d_{l}}$, $U_{1}\in R^{N}$, $U_{2}\in R^{f_{l}\times N}$, $U_{3}\in R^{d_{l}}$ are learnable parameters. $E=\{E_{ij}\in R^{d_{l}\times d_{l}}\}$ represents the level of temporal dependencies between time $i$ and $j$. 
The attention scores $\overline{E}$ quantify the relevance of client $j$'s (history-aware) representation to client $i$'s representation at each time step, allowing the model to dynamically focus on different time periods based on the temporal context.
We then apply the normalized temporal attention matrix directly to the input and get 
\begin{equation}
    \hat{\mathcal{H}}^{(l)}= \mathcal{H}^{(l)}\cdot\overline{E}=TAt(H^{l}_{1},\cdots, H^{l}_{N}).
    \label{temporal_attention}
\end{equation}
}

\textcolor{black}{\textbf{2. Spatial-Temporal Graph Convolution}}

\textcolor{black}{To efficiently capture both spatial and temporal long-range spatial-temporal dependencies, we design a spatial-temporal attention message passing layer, guided by the temporal attention mechanism described in the previous section. This layer allows each client to selectively aggregate information from its spatial neighbors, its own temporal history, and the temporal histories of its neighbors, thus enabling fine-grained and flexible information propagation across both space and time. For spatial convolution, we adopt the GraphSAGE \cite{hamilton2017inductive}, which enables inductive representation learning by aggregating feature information from a client’s local neighborhood by using a sampling approach. In our model, the neighborhood of each client is defined by the learned personalized weighted matrix $\boldsymbol{\lambda}={\lambda_{i,j}}\in R^{N\times N}$, where each element $\lambda_{i,j}$ quantifies the  correlation strength between station $i$ and station $j$. 
This matrix not only defined by physical distance but also latent relationships between stations learned during training. To be specifically, when conducting the graph convolutions, we will accompany the similarity weighted adjacency matrix $\boldsymbol{\xi}$ with the spatial weighted matrix $\boldsymbol{\zeta}\in R^{N\times N}$ to dynamically adjust the impacting weights between stations.  The resulting attention matrix $\boldsymbol{\lambda}$ captures both data-driven and spatial correlations, and its detailed construction and learning process is described in the following subsection. 
The feature representation of client $i$ is then updated by aggregating its own features with the features of its attention-weighted neighbors, enabling the model to capture dependencies more effectively.
The spatial-temporal aggregation mechanism consists of the following steps:}
$\mathcal{H}^{(l)}_{N(i)} = \text{AGGREGATE}\left({\mathcal{\hat{H}}^{(l-1)}_{v} \mid v \in N(i)}, \boldsymbol{\lambda}\right)$, $\mathcal{H}^{(l)}_{i} = \sigma(W^{(l)}\cdot CONCAT(\mathcal{\hat{H}}^{(l-1)}_{i},\mathcal{H}^{(l)}_{N(i)}))$, 
\textcolor{black}{where $\mathcal{\hat{H}}^{(l-1)}_{i}$ denotes the feature embedding of client $i$ at the $l-1$-th layer, $\mathcal{H}^{(l)}_{N(i)}$ represents the aggregated feature embedding from its neighbors $N(i)$. In our implementation, the aggregation function is set to the element-wise mean over the neighboring clients, with the neighbors being determined by the non-zero entries in the  attention matrix $\boldsymbol{\lambda}$. }

\textcolor{black}{After the graph convolution operations,
we incorporate a standard convolution layer in the temporal
dimension. The temporal convolution operations can update
the station’s embedding by involving the information at the
neighboring time slice, represented as:}
$\mathcal{H}^{l+1} = ReLU(\Phi*(ReLU(\text{GraphSAGE}(\mathcal{H}^{l}))$.
\textcolor{black}{In the temporal convolution operation,
$*$ denotes a standard convolution, $\Phi$ is the
parameter of the temporal dimension convolution kernel,
and the activation function is ReLU. Then the final
output spatial-temporal feature $\mathcal{H}^{d}=\mathcal{H}^{l+1}$ and $\mathcal{H}^{h}$ will be concatenated together and input to the Decoder to forecast the multi-quantiles output.}

\textbf{Noted:} Most existing spatial-temporal GNNs combine GCNs with RNNs or CNNs to model spatial and temporal dependencies. However, RNN-based models suffer from limited temporal receptive fields, often missing long-range dependencies, while CNN-based methods face a trade-off between kernel size and depth, impacting efficiency and expressiveness. To overcome these limitations, we propose a spatial-temporal attention message passing layer that explicitly aggregates information from both spatial neighbors and their historical states through a temporal attention mechanism. This attention-guided aggregation captures spatial correlations and temporal dynamics jointly, followed by temporal convolution for enhanced temporal modeling. Our approach enables clients to effectively extract long-range, context-aware spatial-temporal patterns beyond traditional stacked architectures.

\subsection{Personalized Federated Graph Learning}
In the proposed federated graph learning framework, all clients solve the optimization problem Eq. \eqref{eqn:pfl_problem} in a federated manner to protect data privacy while obtaining personalized models. To enhance privacy protection and improve communication efficiency, we employ a hidden representation sharing strategy rather than directly sharing raw data or complete model parameters.  {\color{black}
Specifically, our approach involves the following key components:

\begin{enumerate}
    \item \textbf{Hidden Representation Extraction}: Each EV charging station $i$ extracts hidden representations $H^{h}_{i}=Conv1d(\textbf{\textit{X}}^{h}_{i})$ from its local time series data $\textbf{\textit{X}}^{h}_{i}$.
    
    \item \textbf{Spatial-Temporal Feature Learning}: These hidden representations are processed through spatial-temporal attention graph convolution networks to generate spatial-temporal features $H^{d}_{i}$.
    
    \item \textbf{Personalized Model Update}: Each client updates its local model by solving:
    \begin{align}\label{eqn:local_optimization}
    \min_{\bold{w}_i} f_i(\bold{w}_i; H^{D}_{i}) + \frac{\beta_i}{2}\|\bold{w}_i-\sum_{j\in [N]} \lambda_{i,j} \bold{w}_j\|^2,
    \end{align}
    where $H^{D}_{i}=[H^{{d}}_{i},H^{h}_{i}]$ represents the concatenated hidden representations and $f_i(\bold{w}_i; H^{D}_{i})$ indicates that the local loss function depends on these hidden representations.
\end{enumerate}

Upon receiving these local models, the server generates personalized global models $\bold{w}^G=\{\bold{w}^G_1,...,\bold{w}^G_N\}$ through aggregation, where $\bold{w}^G_i = \sum_{j \in [N]} \lambda_{i,j} \bold{w}_j$ for each client $i$. The personalized aggregation coefficient $\lambda_{i,j}$ is defined as $\lambda_{i,j}=\alpha \zeta_{i,j} + (1-\alpha) \xi_{i,j}$, combining the spatial weighted adjacency matrix $\zeta_{i,j}$ with the similarity weight $\xi_{i,j}$ that we focus on determining in this section.

To enhance resilience against potential malicious model attacks during the aggregation process, we compute the weights $\xi_{i,j}$ based on model similarity. The crucial component is a well-designed similarity function $\mathcal{S}(\bold{w}_i, \bold{w}_j, \chi_i)$ that evaluates the relationship between models $\bold{w}_i$ and $\bold{w}_j$ while incorporating a credit parameter $\chi_i$ representing the trust level station $i$ assigns to other stations. To establish a generalized framework for $\mathcal{S}(\bold{w}_i, \bold{w}_j, \chi_i)$, we introduce a mechanism-based function $\mathcal{A}(x, \chi_i)$ with the following key properties: $\mathcal{A}(0, \chi_i) = 1$, $\mathcal{A}(x, \chi_i) \leq \chi_i$ for $x \neq 0$, and it decreases monotonically with $x$, with values in $(0, 1]$. Based on $\mathcal{S}(\cdot)$ and $\mathcal{A}(\cdot)$, the similarity weight can be given as follow:
\begin{align}
\label{eqn:lambda_opt1}
\begin{cases}
& S_{i,j}=\mathcal{A}_i \biggl(\frac{|| \bold{w}_i-\bold{w}_j||}{|| \bold{w}_i||}, \chi_i\biggl), \\
& \xi_{i,j}=\mathcal{S}(\bold{w}_i, \bold{w}_j, \chi_i) = S_{i,j}/\sum_{j\in[N]} S_{i,j},
\end{cases}
\end{align}
where $\chi_i$ serves as a credit parameter representing the trust client $i$ places on other clients. It should be noticed that $\mathcal{S}(\bold{w}_i, \bold{w}_j, \chi_i)$ is a generalized definition, which is determined by $\mathcal{A}(\cdot)$, and there are many functions that can satisfy the requirements for $\mathcal{A}(\cdot)$. In this paper, we choose $\mathcal{A}_i(x,\chi_i)=\exp(-||\bold{x}||^2/\phi(\chi_i))$, therefore, we have:
\begin{align}
\begin{cases}
\mathcal{A}_i \biggl(\frac{|| \bold{w}_i-\bold{w}_j||}{|| \bold{w}_i||}, \chi_i\biggl) = \exp\biggl(-\frac{|| \bold{w}_i-\bold{w}_j||^2}{|| \bold{w}_i||^2 \phi(\chi_i)}\biggl),\\
\phi(\chi_i)= -\frac{||\bold{w}_i-\bold{w}_k||}{||\bold{w}_i|| \ln \chi_i}, k = \text{argmin}_{k,k\ne i} ||\bold{w}_i-\bold{w}_k||,
\end{cases}
\end{align}
 where $\phi(\chi_i)$ is chosen to ensure that $\mathcal{A}_i \biggl(\frac{|| \bold{w}_i-\bold{w}_j||}{|| \bold{w}_i||}, \chi_i\biggl) < \chi_i$. Based on this mechanism, malicious models will be identified as dissimilar to normal ones and thus assigned small weights during aggregation. This aggregation method is characterized as message passing, in which clients share information that combines knowledge from their neighboring clients' models and their own model.

The above attention-based method calculates the weight coefficient $\boldsymbol{\xi}$ between clients based on their data distribution. To further leverage the spatial relationships among different stations, the similarity value is reweighted with the spatial weighted adjacency matrix $\boldsymbol{\zeta}$, i.e., 
\begin{align}
    \boldsymbol{\lambda}_i = \alpha \boldsymbol{\xi}_i + (1-\alpha) \boldsymbol{\zeta}_{i},
    \label{channel_weight}
\end{align}
where $\alpha \in [0, 1]$ is a constant to balance $\boldsymbol{\xi}$ and $\boldsymbol{\zeta}$, which is called channel weight, and $\boldsymbol{\lambda}=[\boldsymbol{\lambda}_1,...,\boldsymbol{\lambda}_N]=\{\lambda_{i,j}\}$ is the ultimate personalized weight that used to aggregate the global model for each client.} Note that clients with $\xi_{i,j}=0$ will be excluded during the aggregation to guarantee the robustness of the global model. For better understanding, the proposed personalized federated learning algorithm is summarized in Algorithm \ref{alg:server}. Under mild assumption, the algorithm will converge to a table point, where the convergence analysis is given in Appendix A.

\begin{algorithm}
\caption{Personalized Federated Graph Learning}\label{alg:server}
\LinesNumbered
\KwIn{Initialization: $\bold{w}_i(0)=\bold{w}_0, \forall i \in [N]$, learning rate $\eta_{\tau}$, number of communication rounds $T$, number of local iterations $K$, total clients $[N]$;}
 
\For{$\tau = 0,1,\ldots,T-1$}{
    \For{$i \in [N]$}{
        {\color{black} $H^{h}_{i} \gets Conv1d(\textbf{\textit{X}}^{h}_{i})$}\;

        {\color{black} Upload $H^{h}_{i}$ to server}
    }
    
    {\color{black} // Server processes hidden representations}\\
    {\color{black} \For{$i \in [N]$}{
        $H^{d}_{i} \gets$ STGNN($\{H^{h}_{1}, H^{h}_{2}, \ldots, H^{h}_{N}\}$)\;
        Send $H^{d}_{i}$ to client $i$
    }}
    
    \For{$i \in [N]$}{
        {\color{black} // Client forms complete hidden representation}\;
        {\color{black} $H^{D}_{i} \gets [H^{d}_{i}, H^{h}_{i}]$}\;
        {\color{black} // Update model using hidden representations}
        $\bold{w}_i(\tau) \gets \texttt{Client-Update}(\bold{w}_i(\tau), H^{D}_{i})$;
    }
    
    receive $\bold{w}_i(\tau), \forall i \in [N]$ from the clients\;
    \If{all updates are received}{
        \For{$i \in [N]$}{
            calculate $\boldsymbol{\xi}_i = (\xi_{i,1}, \ldots, \xi_{i,N})$ using (\ref{eqn:lambda_opt1})\;
            $\boldsymbol{\lambda}_i \gets \alpha \boldsymbol{\xi}_i + (1 - \alpha)\boldsymbol{\zeta}_i$\;
            $\bold{w}^G_i(\tau+1) \gets \sum_{j \in [N]} \lambda_{i,j} \bold{w}_j(\tau)$\;
            broadcast $\bold{w}^G_i(\tau+1)$ to client $i$\;
        }
    }
}

Local tuning for one epoch\;

\texttt{Client-Update}($\bold{w}^G(\tau), H^{D}_{i}$)\;
receive model $\bold{w}_i^G(\tau+1)$ from server\;
\For{$k = 0$ \textbf{to} $K-1$}{
    {\color{black} // Gradient computation using hidden representations}\;
    {\color{black} $\nabla f_i(\bold{w}^k_i(\tau); H^{D}_{i}) \gets $ Compute gradient using $H^{D}_{i}$}\;
    $\bold{w}^k_i(\tau) \gets \bold{w}^k_i(\tau) - \eta_{\tau} (\nabla f_i(\bold{w}^k_i(\tau); H^{D}_{i}) + \beta_i (\bold{w}^k_i(\tau) - \bold{w}^G_i(\tau)))$\;
}
return $\bold{w}^K_i(\tau)$\;
\end{algorithm}








\

\subsection{Robust Analysis Against Cyberattacks}

\textcolor{black}{In federated learning for EV charging demand forecasting, 
cyberattacks can be motivated by various adversarial objectives. 
Economically, competitors may manipulate demand forecasts to gain market 
advantages in electricity trading or charging infrastructure investment 
decisions. Operationally, attackers could aim to destabilize grid planning 
by causing overestimation (leading to unnecessary infrastructure investment) 
or underestimation (resulting in supply shortages and service disruptions).These attacks can materialize at two critical stages:
(1) during \textit{local data collection}, where compromised IoT sensors 
or communication channels can inject false charging measurements, and 
(2) during \textit{model upload}, where malicious clients can manipulate 
local model parameters before transmitting to the central server. These attacks manifest as Byzantine behavior, where 
malicious clients submit poisoned models $\bold{w}_j^{malicious}$ that 
deviate arbitrarily from honest updates. The challenge is exacerbated in 
heterogeneous environments: EV charging data naturally varies due to 
geographic location (urban vs. suburban), infrastructure type (fast vs. 
standard chargers), and user behavior (commuter vs. leisure patterns). 
This inherent heterogeneity causes legitimate clients' models to diverge 
from each other, making benign diversity indistinguishable from malicious 
manipulation using simple distance-based detection. Traditional 
Byzantine-robust aggregation methods\cite{blanchard2017machine},\cite{yin2018byzantine} thus face a 
dilemma in non-i.i.d. settings: either falsely exclude unique but honest 
clients, or tolerate poisoned models that corrupt the global aggregation.}

\textcolor{black}{Our credit-based attention mechanism addresses this by 
adapting to expected heterogeneity while detecting abnormal deviations. 
The key insight is that benign heterogeneity follows predictable spatial 
patterns (nearby stations have correlated demand), whereas attacks 
introduce spatially-incoherent, abrupt divergence.}
Specifically, the robustness of the personalized model is guaranteed by designing the attention-based function $\mathcal{A}(\cdot)$, \textcolor{black}{which calculates the similarity component $S_{i,j}$ used to determine the 
similar
weights $\xi_{i,j}$.} And it is chosen as
 $\mathcal{A}_i(\bold{x},\chi_i)=\exp(-||\bold{x}||^2/\phi(\chi_i))$, where $\chi_i \in (0, 1)$ is considered as a credit parameter that client $i$ put on other clients. Intuitively, large $\chi_i$ means client $i$ gives high credit to the other clients and believes that the data distribution in other clients is similar to itself, whereas small $\chi_i$ means conversely. \textcolor{black}{The core idea is that} if one of the local model $\bold{w}_j$ is poisoned, then its similarity value to model $\bold{w}_i$ will be small. Thus, the effect of the poisoned model $\bold{w}_j$  to the global model  $\bold{w}_i^G$ of client $i$ will be trivial.  \textcolor{black}{This mechanism is illustrated in Fig. \ref{fig:robustness}.}
In this paper, we assume that there is at least one client that owns a similar data distribution to client $i$, and we set the similarity value of client that is most similar to client $i$  as $\chi_i$, such that $\phi(\chi_i)$ is determined and $\xi_{i,j}$ will be determined adaptively, namely:
\begin{align}\label{eqn:sigma}
\begin{cases}
    &\mathcal{A}(\frac{||\bold{w}_i-\bold{w}_j||}{||\bold{w}_i||}, \chi_i)=\chi_i, \\
    & \text{s.t.} \quad j = \text{argmin}_{j,j\ne i} ||\bold{w}_i-\bold{w}_j||,
\end{cases}
\end{align}
and therefore $\phi(\chi_i)= -\frac{||\bold{w}_i-\bold{w}_j||}{||\bold{w}_i|| \ln \chi_i}$. 
Based on (Eq. \eqref{eqn:sigma}), the normalized vector $||\bold{w}_i||$ helps to work against the scaling attack, and then 
the attention-based function $\mathcal{A}(\cdot)$ could still work even when the dimension of $\bold{w}_i$, and $\bold{w}_j$ are high, which is often the case for deep neural networks. 
Based on the above analysis, we argue that the robustness of the global model for different clients can be guaranteed with the designed  $\mathcal{A}(\cdot)$. \textcolor{black}{Therefore, the proposed method is robust against common model poisoning attacks, including parameter flipping, model scaling, and Gaussian noise injection, as also validated by the false data injection experiments in Section \ref{section4}.}

\begin{figure}
    \centering
    \includegraphics[width=0.8\linewidth]{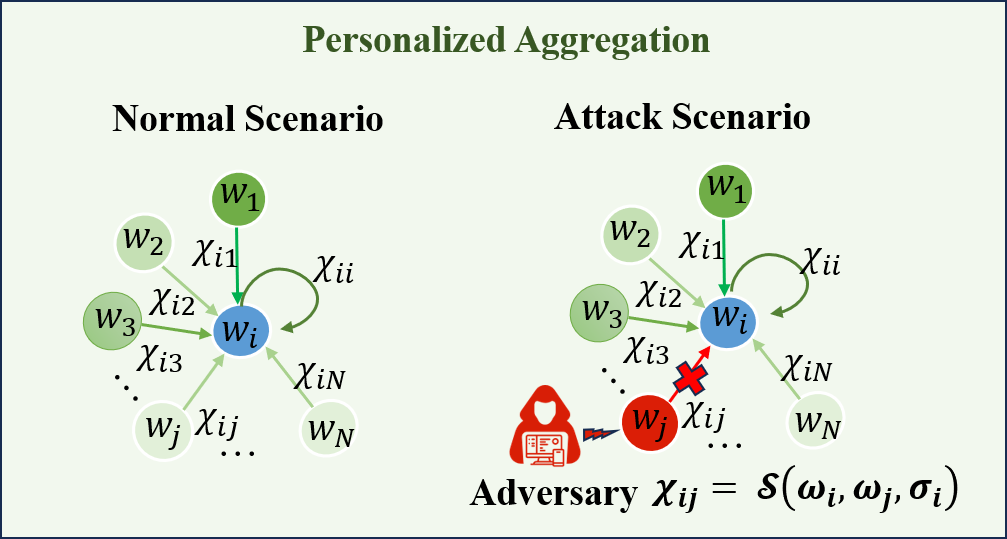}
    \vspace{-0.4cm}
    \caption{\textcolor{black}{Illustration of personalized aggregation for client i. (Left) Normal Scenario: Client $i$ aggregates parameters $w_{j}$ from benign neighbours weighted by attention $\lambda_{ij}$. (Right) Attack Secenario: An adversary poisoned client $j$'s model $w_{j}$. The attention mechanism $\xi_{ij}\sim \mathcal{A}_i(\frac{||\bold{w}_i-\bold{w}_j||}{||\bold{w}_i||}, \chi_i)$ assigns a low weight due to the large parameter difference, mitigating the attack's impact on client $i$'s update.}}
    \label{fig:robustness}
\end{figure}

\section{Case Studies}
\label{section4}
We conduct case studies on three real-world datasets—Palo Alto \cite{EVdata}, Shenzhen \cite{c1} and \textcolor{black}{UrbanEV} \cite{li2025urbanev}—capturing small- and large-scale non-IID spatial-temporal demand patterns (see Table~\ref{dataset}). To evaluate generalization, especially to unseen stations, we adopt an inductive setting on the Shenzhen dataset: 200 stations are used for training and validation, while 47 unseen stations are reserved for testing. This setup reflects real-world scenarios where new locations emerge dynamically.

We first compare our personalized federated graph learning framework with baseline methods to assess the effectiveness of GNNs and personalization. We then introduce false data injection attacks to evaluate robustness against adversarial clients and examine the model’s ability to handle heterogeneity caused by diverse but benign behaviors. We also analyze the impact of the parameter $\chi_i$.
Due to space constraints, \textcolor{black}{data processing\&distribution} and experiments involving different GNN blocks, residual structures for addressing the over-smoothing problem, sensitivity analysis of $\alpha$, and scalability evaluation with varying numbers of EV charging stations are presented in the Appendix.
\begin{table}[h]
\centering
\caption{Statistics of Real-world Datasets}
\label{dataset}
\scriptsize
\resizebox{\linewidth}{!}{
\begin{tabular}{@{}l c c c@{}}
\hline \hline
Property & Palo Alto & Shenzhen & \textcolor{black}{UrbanEV}\\
\hline
Station Num & 8 & 247 & \textcolor{black}{275}\\
Collection Period & Jan 2018--Jan 2020 & Jun 19--Jul 18, 2022 & \textcolor{black}{Sep 1, 2022--Feb 28, 2023}\\
Frequency & 30 min & 5 min & \textcolor{black}{5 min}\\
Dataset Size & 34,993 & 18,061 & \textcolor{black}{52128}\\
Features & \makecell{Time, Energy,\\ Address} & 
\makecell{Time, Energy, Address\\ Weather, Price, Occupancy, Duration} & 
\textcolor{black}{Same as Shenzhen}\\
\hline \hline
\end{tabular}
}
\end{table}

\vspace{-0.25cm}
\subsection{Baseline Approaches}
We compare our Personalized Federated Graph Learning framework (denoted as \textbf{PFGL}) with \textcolor{black}{11} benchmark methods. We include LSTM-FedAvg, LSTM-PFL, LSTM-CNN-FedAvg, and LSTM-CNN-PFL as LSTM-based federated learning baselines. \textcolor{black}{FedProx \cite{li2020federated} and pFedMe \cite{t2020personalized} are SOTA PFL baseline.} \textcolor{black}{We also consider centralized training (Central) as the Upper Bound, where a model is trained on the aggregated global dataset without federated constraints. Conversely, No\_FL trains a LSTM-based model locally on each client without any communication.}
For GNN-based methods, we include PAG-FedAvg \cite{c1} and GCRN-FedAvg \cite{c3}, which leverage spatial-temporal graphs with federated averaging. Additionally, we evaluate CNFGNN \cite{meng2021cross} and CNFGNN-PFL, which decouple temporal modeling on clients and spatial modeling on the server via a hybrid split learning and federated optimization strategy.

The detailed implementation settings, including model configurations, learning rates, are provided in the Appendix B.
We evaluate performance using the quantile loss (QS), together with two standard metrics from quantile regression: Mean Interval Length (MIL) and Interval Coverage Percentage (ICP), summarized in Table \ref{Evaluation Metrics}. Here, $\hat{y}^{t}_{i,q}$ denotes the predicted $q$-th quantile for instance $i$ at time $t$, with $q \leq q'$. \textcolor{black}{We define the quantile interval as the range between $q=0.1$ and $q=0.9$.} The testset size is denoted by $n$ and $i$ indexes test instances. Ideally, ICP should be close to 0.8, while MIL should be minimized. \textcolor{black}{QS measures the deviation between predicted and realized demand quantiles and reflects the risk of over- or under-estimating charging load. MIL captures the width of the uncertainty interval and corresponds to reserve capacity cost. ICP measures how reliably actual demand falls within the predicted range and indicates dispatch safety. In practice, lower QS reduces misforecasting risk, smaller MIL avoids excessive power reservation, and well-calibrated ICP enables safer and more economical EV charging operations, particularly under cyber-induced demand disturbances.}




\begin{table}[]
\setlength{\abovecaptionskip}{0cm}
\setlength{\belowcaptionskip}{-0.6cm}
\centering
\vspace{-0.4cm}
	\caption{Evaluation Metrics}
        \label{Evaluation Metrics}
        \scalebox{1}{
	\begin{tabular}{c}
		\hline \hline
		\multirow{9}{*}{}  
QS=
\(
\begin{cases}
    \frac{1}{n}\sum^{n}_{i=1}(1-q)\|y^{t}_{i}-\hat{y}^{t}_{i,q}\|,y^{t}_{i}<\hat{y}^{t}_{i,q} \\
    \frac{1}{n}\sum^{n}_{i=1}q\|y^{t}_{i}-\hat{y}^{t}_{i,q}\|,y^{t}_{i}\leq\hat{y}^{t}_{i,q},
\end{cases}\)
\\
ICP=
\(\frac{1}{n}\sum^{n}_{i=1}
\begin{cases}
   1,    \hat{y}^{t}_{i,q}\leq y^{t}_{i}\leq\hat{y}^{t}_{i,q^{'}} \\
    0, otherwise
\end{cases}\)
\\
MIL = \(\frac{1}{n}\sum^{n}_{i=0}(\|\hat{y}^{t}_{i,q}-\hat{y}^{t}_{i,q^{'}}\|)\)\\
		\hline \hline
	\end{tabular}}
\end{table}

\subsection{Predictive Performance in Benign Environments}

The performance of the proposed method is compared with the result of the baseline. For Palo Alto dataset, the QS, MIL, and ICP of different methods are shown in Table \ref{QS}, \ref{MIL}, and \ref{ICP} respectively. The ``improved rate" represents the percentage of stations whose results are better than those of the No\_FL. \textcolor{black}{For Shenzhen and UrbanEV dataset, the metrics are shown in Table \ref{QS_2}.}

On the Palo Alto dataset, PFGL achieves the best average QS and MIL scores, outperforming baselines in 7/8 (Step-6) stations for QS, and 6/8 for MIL in Step-1  and maintaining clear advantages in long-term forecasting (Step-6). Representative 6-step-ahead predictions are illustrated in Fig. \ref{fig:LSTM_pred}, demonstrating PFGL’s superior temporal modeling capability. For large scale dataset, on the Shenzhen dataset (Table \ref{QS_2}), 
with 200 stations (47 unseen during training), PFGL improves Step-1 performance by 64.35\% (QS) and 92.42\% (MIL) over No\_FL, for Step-6, improvements are 48.21\% and 78.64\%, demonstrating strong generalization under non-IID and partial client participation settings. \textcolor{black}{On the UrbanEV dataset, characterized by higher spatial heterogeneity, PFGL prioritizes reliability, improving ICP by 76\% and QS by 57.7\% at Step-6, indicating its robustness in dense and heterogeneous urban charging networks.}

\textcolor{black}{As detailed in the Data Distribution and Heterogeneity Analysis of Appendix \ref{appendixb}, the three datasets exhibit substantially different levels of spatial heterogeneity. Palo Alto shows the highest heterogeneity ($JS = 0.4206$), followed by UrbanEV ($JS = 0.2945$), while Shenzhen is the most homogeneous ($JS = 0.2585$).
These distributional differences explain the observed behaviors in uncertainty estimation.
For the Shenzhen dataset, the relatively homogeneous demand patterns encourage PFGL to generate very sharp quantile interval with high point accuracy. However, such narrow intervals make the coverage more sensitive to local demand fluctuations, which leads to a noticeable reduction in ICP.
By contrast, both UrbanEV and Palo Alto exhibit higher heterogeneity. Under these conditions, PFGL produces wider and more conservative intervals that better capture real demand variations across stations, resulting in consistently higher ICP values close to 0.8.
This behavior demonstrates that PFGL adapts its uncertainty modeling to the underlying data distribution and remains robust under strong spatial heterogeneity.}

\textcolor{black}{To quantify the impact of federated constraints, we additionally report the performance of centralized training \cite{yi2022electric} (Upper Bound), where the model has access to all client data. PFGL remains competitive with this upper bound across datasets, indicating that the proposed personalized federated framework effectively alleviates the trade-off between data privacy and forecasting accuracy.
We further include recent personalized federated learning methods, FedProx and pFedMe, beyond the standard FedAvg. In most settings, FedProx and pFedMe consistently outperform FedAvg, confirming the necessity of personalized regularization under non-IID conditions. Overall, PFGL achieves the best QS, demonstrating the effectiveness of integrating graph modeling with personalized federated optimization.}

\textcolor{black}{In addition, while hybrid CNN/GNN architectures generally improve forecasting performance, PFGL consistently outperforms all federated baselines. Its occasional decrease in ICP indicates that the aggregation process may smooth out local variations among clients, which motivates future work on variance-aware aggregation and Bayesian GNNs to enhance uncertainty estimation without degrading point-forecast accuracy.}

\begin{table*}[]
\setlength{\abovecaptionskip}{0cm}
\centering
\color{black}{
\vspace{-0.4cm}
	\caption{QS loss for different FL with forecasting models for Palo Alto dataset}
 \vspace{-0.1cm}
        \label{QS}
        \scalebox{0.75}{
	\begin{tabular}{l l  l l c c c c c c c c c c}
		\hline \hline
		                & Method & Forecasting Model &FL Method & WEBS & HIGH & TED & MPL & CAMB & BRYANT & HAMLT & RINCO & Avg & Improved rate\\ 
		\hline
		\multirow{7}{*}{Step 1} 
            & Central & LSTM & --- & --- & --- & --- & --- & --- & ---  & --- & --- & 0.6274 & --- \\
            & No\_FL &LSTM  & --- &0.9973   & 1.1415  & 0.8761  &  0.7271 &  1.1348 & 1.0764  &0.7285  &  0.6035 & 0.9107 & 0\%    \\
            & LSTM-FedAvg & LSTM & FedAvg &0.9842 &1.0976 &0.9069 &0.7268 &1.1435 &1.0472 &0.7699 &0.6603 &0.9171 &50\%\\
            &LSTM-CNN-FedAvg & LSTM-CNN & FedAvg &0.967 &1.0862 &0.8445 &0.7131 &1.1492 &1.011 &0.701 &0.5859 &0.8823 &87.5\% \\
            &LSTM-FedProx & LSTM & FedProx &0.9916 &1.1041 &0.8417 &0.7546 &1.2224 &0.9593 &0.8029 &0.6825 &0.9199 &50\% \\
            &LSTM-pFedMe & LSTM & pFedMe &\textbf{0.919} &1.058 &0.876 &0.7653 &1.2108 &\textbf{0.9414} &0.7726 &0.683 &0.9033 &50\% \\
            &LSTM-PFL &LSTM & PFL &0.9706 &1.0907 &0.8604 &0.7196 &1.1055 &1.033 &0.7244
            &0.6117 &0.8895 &87.5\% \\
            &LSTM-CNN-PFL &LSTM-CNN & PFL &0.984 &1.0697 &0.844 &0.7129 &1.1502 &1.0159 &0.7044
            &0.5723 &0.8817 &87.5\% \\
		&CNFGNN & CNFGNN  &CNFGNN & 0.9251   &1.0634  &0.8406   &0.6934   &1.1053   &1.0228  &0.6939  &0.5736 &0.8648& \textbf{100\%}      \\
            &PAG-FedAvg &PAG & FedAvg &  0.969   &1.1096 &0.9104   &1.1082   &1.1308   &1.1674  &0.7648  &0.8964 &1.0071& \textbf{100\%}\\
            &GraphSAGE-FedAvg &GraphSAGE & FedAvg & 0.9463   &1.0682 &\textbf{0.823}   &0.6961   &\textbf{1.0883}  &1.0041  &0.6953  &0.5664 &0.8609& \textbf{100\%}\\
            &GCRN-FedAvg &GCRN & FedAvg & 0.9549   &1.0895 &0.8587   &0.7072   &1.1079  &1.0432  &0.704 &0.5711 &0.8796& \textbf{100\%}\\
            &PFGL& STAGCN & PFL &0.9215 &\textbf{1.0575} &0.8295 &\textbf{0.6899} & 1.1216 &1.0006 &\textbf{0.6892} &\textbf{0.552} &\textbf{0.8577} &\textbf{100\%}\\
		\hline
		\multirow{7}{*}{Step 6} 
            & Central & LSTM & --- & --- & --- & --- & --- & --- & ---  & --- & --- & 1.0974
 & --- \\
            & No\_FL &LSTM &---& 1.6377    & 1.8104  & 1.5343  &  1.2219 &  1.8841 & 1.8705  & 1.1031  &  1.1208 &1.5228 & 0\%    \\
            &LSTM-FedAvg &LSTM & FedAvg &1.5075 &1.6605 &1.5281 &1.214 &1.8111 &1.7175 &1.1534
            &1.2843 &1.4846 &75\%\\
            &LSTM-CNN-FedAvg &LSTM-CNN & FedAvg &1.4921 &1.6805 &1.416 &1.1844 &1.861 &1.6717 &1.0111 &1.0968 &1.4267 &\textbf{100\%}\\
            &LSTM-FedProx & LSTM & FedProx &1.3506 &1.7284 &1.3749 &1.2324 &1.7284 &1.5606 &1.1098 &1.2883 &1.4217 &62.5\% \\
            &LSTM-pFedMe & LSTM & pFedMe &1.3558 &1.7227 &1.3416 &1.1548 &1.7239 &1.619 &1.0229 &1.1726 &1.3892 &87.5\% \\
            &LSTM-PFL &LSTM &PFL &1.5107 &1.6504 &1.4995 &1.2037 &1.7949 &1.6674 &1.1186 
            &1.225 &1.4588 &75\%\\
            &LSTM-CNN-PFL &LSTM-CNN &PFL &1.4756 &1.6808 &1.4452 &1.1794 &1.839 &1.6718 &1.0113 &1.0986 &1.4252 &\textbf{100\%}\\
		&CNFGNN & CNFGNN  &CNFGNN & 1.3489   &1.5789   &1.3559   &1.1612   &1.7699   &1.5729  &0.9532  &1.0793 &1.3525 & \textbf{100\%}      \\
         &PAG-FedAvg &PAG &FedAvg & 1.4939   &1.7666 &1.5963   &1.3152   &2.004   &1.7644  &1.044  &1.2587 &1.5304& \textbf{100\%}\\
            &GraphSAGE-FedAvg &GraphSAGE &FedAvg & \textbf{1.3728}   &1.5926 &1.3267  &1.1454   &1.7651  &1.5383  &0.9799  &1.0825 &1.3504& \textbf{100\%}\\
            &GCRN-FedAvg &GCRN &FedAvg & 1.4513   &1.6698 &1.4843   &1.1903   &1.8284  &1.6623  &1.0062 &1.0826 &1.4219& \textbf{100\%}\\
		&PFGL&STAGCN & PFL &1.3768 &\textbf{1.5469} &\textbf{1.3099} &\textbf{1.1325} & \textbf{1.6801} &\textbf{1.5183} &\textbf{0.9503} &\textbf{1.029} &\textbf{1.318} &\textbf{100\%}\\		
		\hline \hline
	\end{tabular}}
    }
\end{table*}
\begin{table*}[]
\setlength{\abovecaptionskip}{0cm}
\setlength{\belowcaptionskip}{-0.6cm}
\centering
\color{black}{
\vspace{-0.4cm}
	\caption{MIL for different FL with forecasting models for Palo Alto dataset}
        \label{MIL}
        \scalebox{0.75}{
	\begin{tabular}{l l l l c c c c c c c c c c}
		\hline \hline
		                &Method &Forecasting Model   &FL Method                        & WEBS & HIGH & TED & MPL & CAMB & BRYANT & HAMLT & RINCO & Avg & Improved rate\\ 
		\hline
		\multirow{7}{*}{Step 1} 
        & Central & LSTM & --- & --- & --- & --- & --- & --- & ---  & --- & --- & 2.4308
 & --- \\
            & No\_FL &LSTM & --- & 3.2895    & 3.3887  & 2.6677  &  2.2182 &  3.4256 & 3.4951  & 2.5997  & 1.7068 & 2.8489 & 0\%    \\
            &LSTM-FedAvg &LSTM & FedAvg &3.3085 &3.5621 &2.8975 &2.1461 &3.6088 &4.1508 &\textbf{1.9084} &1.9912 &2.9467 &25\%\\
            &LSTM-CNN-FedAvg &LSTM-CNN & FedAvg &3.1142 &3.1183 &2.4359 &2.0102 &3.4061 &3.353 &2.426 &1.719 &2.6978 &87.5\%\\
            &LSTM-FedProx & LSTM & FedProx &3.1762 &3.4618 &2.845 &2.0741 &3.4641 &4.0076 &1.8764 &1.6804 &2.8232 &50\% \\
            &LSTM-pFedMe & LSTM & pFedMe &3.1545 &3.8881 &2.941 &2.941 &3.4927 &4.2935 &2.0946 &1.6871 &2.9718 &37.5\% \\
            &LSTM-PFL &LSTM & PFL &3.1057 &3.5875 &2.7625 &2.0527 &3.488 &3.7674 &2.0447 &1.7433 &2.819 &37.5\%\\
            &LSTM-CNN-PFL &LSTM-CNN &PFL &2.955 &3.2347 &2.4363 &\textbf{2.0022} &3.3019 &3.3621 &2.3433 &1.6445
            &2.66 &\textbf{100\%}\\
		& CNFGNN &CNFGNN &CNFGNN & 3.1372  &3.1707   &2.4386   &2.0509  &3.5784   &3.3498 &2.2508  &1.7596 &2.717 & 75\%  \\
        &PAG-FedAvg &PAG & FedAvg & 3.1271   &3.1644 &2.5009   &2.7687   &3.4351   &3.4793  &2.3202  &2.1262 &2.8652& \textbf{100\%}\\
            &GraphSAGE-FedAvg &GraphSAGE &FedAvg &3.1006  &3.2135 &2.439  &2.2346   &3.4751  &3.3365  &2.3352 &1.7506 &2.7356& \textbf{100\%}\\
            &GCRN-FedAvg &GCRN &FedAvg &3.1325  &3.1711 &2.53  &2.1211   &3.5486  &3.1869  &2.3589 &1.6494 &2.7123& \textbf{100\%}\\
		& PFGL &STAGCN & PFL &\textbf{2.9418} &\textbf{3.0921} &\textbf{2.2926} &2.205 & \textbf{3.2676} &\textbf{3.1491} &2.2324 &\textbf{1.4973} &\textbf{2.5847} &\textbf{100\%}\\
		\hline
		\multirow{7}{*}{Step 6} 
         & Central & LSTM & --- & --- & --- & --- & --- & --- & ---  & --- & --- & 4.2677
 & --- \\
            & No\_FL &LSTM &---& 4.8827    & 5.453  & 4.2709  &  4.1653 & 5.4922 & 6.2315  & 3.708  &  3.484& 4.711 & 0\%    \\
            &LSTM-FedAvg &LSTM &FedAvg&5.014 &5.3843 &4.4358 &3.4362 &5.3279 &6.7106 &2.9766 &3.1284
            &4.5517 &63.3\%\\
            &LSTM-CNN-FedAvg &LSTM-CNN &FedAvg  &4.2879 &4.7923 &4.1441 &3.7645 &4.8621 &5.6288 &3.1034
            &3.3035 &4.2358 &\textbf{100\%}\\
            &LSTM-FedProx & LSTM & FedProx &5.0022 &5.3836 &4.2921 &\textbf{3.3704} &5.2858 &6.5579 &\textbf{2.9011} &3.037 &4.4788 &62.5\% \\
            &LSTM-pFedMe & LSTM & pFedMe &4.2383 &5.317 &5.317 &5.317 &5.2964 &6.7345 &3.2764 &3.3657 &4.527 &87.5\% \\
            &LSTM-PFL &LSTM &PFL &4.892 &5.2963 &4.3127 &3.4221
            &5.1952 &6.239 &3.0354 &3.2 &4.4356 &63.3\%\\
            &LSTM-CNN-PFL &LSTM-CNN &PFL &4.4348 &4.7678 &3.8686 &3.6398  &4.9492 &5.3042 &3.0571 &\textbf{3.1986}
            &4.1525 &\textbf{100\%}\\
		& CNFGNN &CNFGNN &CNFGNN & 4.2927  &4.5791   &3.7782   &3.7633   &\textbf{4.5811}   &5.1661  &3.0256  &3.3389 &4.0656& \textbf{100\%}      \\
		&PAG-FedAvg &PAG & FedAvg & 4.4805   &4.7293 &3.6626   &4.3968   &4.9122   &5.3385  &3.2354  &3.2142 &4.2462& \textbf{100\%}\\
            &GraphSAGE-FedAvg &GraphSAGE &FedAvg &4.3044  &4.4738 &3.7838  &3.6303   &4.6727  &5.2961  &2.9537 &3.5716 &4.0858& \textbf{100\%}\\
            &GCRN-FedAvg &GCRN &FedAvg &4.3415  &4.7192 &4.0985  &4.0128   &4.9653  &5.1394  &3.1249 &3.6126 &4.2518& \textbf{100\%}\\	
            & PFGL & STAGCN &PFL &\textbf{4.1713}&\textbf{4.4544} &\textbf{3.701} &3.5485 &4.7736 &\textbf{5.0187}&2.9461 &3.2368 &\textbf{3.9813} &\textbf{100}\%\\
		\hline \hline
	\end{tabular}}
    }
\end{table*}
\begin{table*}[]
\setlength{\abovecaptionskip}{0cm}
\setlength{\belowcaptionskip}{-0.6cm}
\centering
\color{black}{
\vspace{-0.4cm}
	\caption{ICP for different FL with forecasting models for Palo Alto dataset}
        \label{ICP}
        \scalebox{0.75}{
	\begin{tabular}{l l l l c c c c c c c c c c}
		\hline \hline
		&Model                & Forecasting Model  &FL Method                         & WEBS & HIGH & TED & MPL & CAMB & BRYANT & HAMLT & RINCO & Avg & Improved rate\\ 
		\hline
		\multirow{7}{*}{Step 1} 
        & Central & LSTM & --- & --- & --- & --- & --- & --- & ---  & --- & --- & 0.8261
 & --- \\
            & No\_FL &LSTM &---& 0.8089   & 0.7804  & 0.7365  &  \textbf{0.8007} & 0.7618 & 0.8219  &0.8264  &  \textbf{0.8066} & \textbf{0.7929} & 0\%    \\
            &LSTM-FedAvg &LSTM &FedAvg &0.8378 &0.8468 &0.8448 &0.8218 &\textbf{0.8111} &0.8995 &0.464 &0.8283 &0.7942 &25\%\\
            &LSTM-CNN-FedAvg &LSTM-CNN &FedAvg &0.8312 &0.8145 &0.6424
            &0.7203 &0.7672 &0.7335 &0.8362 &0.8111 &0.7696 &25\%\\
            &LSTM-FedProx & LSTM & FedProx &0.7795 &0.8389 &0.8322 &0.4976 &0.7898 &0.7898 &0.4595 &0.821 &0.7388 &25\% \\
            &LSTM-pFedMe & LSTM & pFedMe &0.8286 &0.8141 &0.8141 &0.7987 &0.814 &0.8666 &0.7651 &0.8224 &0.8172 &37.5\% \\
            &LSTM-PFL &LSTM &PFL &0.5825 &0.8483 &0.846 &0.7287 &0.8132 &0.8807 &0.804 &0.8237 &0.7909 &37.5\%\\
            &LSTM-CNN-PFL &LSTM-CNN &PFL &0.8138 &\textbf{0.8035} &\textbf{0.8035} &0.6792 &0.7057 &0.7452 &0.8483 &0.8019 &0.779 &25\%\\
		& CNFGNN &CNFGNN &CNFGNN& \textbf{0.803}   &0.7694   &0.7515   &0.7655   &0.7846   &0.7817  &0.7886  &0.7538 &0.7755 & \textbf{62.5\%}     \\
        &PAG-FedAvg&PAG &FedAvg& 0.7614   &0.7439 &0.5873   &0.5145   &0.7161   &0.6194  &0.6109  &0.5474 &0.6376& 0\%\\
            &GraphSAGE-FedAvg &GraphSAGE &FedAvg &0.809  &0.8261&0.8328  &0.7436  &0.8132  &0.8411  &0.8089 &0.841&0.8145& 37.5\%\\
            &GCRN-FedAvg&GCRN &FedAvg &0.8062  &0.7605 &0.8073  &0.7847   &0.8158  &0.6988  &0.7618 &0.8111 &0.7808& 0\%\\
		& PFGL &STAGCN &PFL &0.7829 &0.7836 &0.7309 &0.8096 &0.7346 &\textbf{0.7821} &\textbf{0.799}&0.7876&0.7763 &37.5\%\\
		\hline
		\multirow{7}{*}{Step 6} 
        & Central & LSTM & --- & --- & --- & --- & --- & --- & ---  & --- & --- & 0.8285
 & --- \\
            & No\_FL &LSTM &--- &0.7353 &0.7713 &0.7488 &0.7734 & 0.7419&0.7902 &0.7481 &0.7592 &0.7585 &0\%    \\
            &LSTM-FedAvg &LSTM &FedAvg &0.5818 &\textbf{0.789} &0.7603 &0.4764 &\textbf{0.7955} &0.7921 &0.4762 &0.7095 &0.6726 &50\%\\
            &LSTM-CNN-FedAvg &LSTM-CNN &FedAvg &0.7194 &0.7039 &0.7642
            &0.7449 &0.7372 &0.7757 &\textbf{0.7631} &0.7555 &0.7455 &25\%\\
            &LSTM-FedProx & LSTM & FedProx &0.7748 &0.8154 &0.7662 &0.7463 &0.7776 &0.8366 &0.4702 &0.7063 &0.7063 &50\% \\
            &LSTM-pFedMe & LSTM & pFedMe &0.7408 &0.7884 &0.789 &0.735 &0.788 &0.8185 &0.476 &0.7201 &0.732 &50\% \\
            &LSTM-PFL &LSTM &PFL &0.5702 &0.8309 &0.7679 &0.4802 &0.7899 &0.8676 &0.4973 &0.7269 &0.691 &25\%\\
            &LSTM-CNN-PFL &LSTM-CNN &PFL &0.7749 &0.7466 &0.7644 &0.7616 &0.743 &\textbf{0.8038} &0.7376 &0.7111
            &0.7554 &50\%\\
		&CNFGNN & CNFGNN &CNFGNN  & 0.7404
   &0.7411  &0.7355   &0.7563   &0.7021   &0.7206  &0.764  &0.7521 &0.739 & 25\%      \\
   &PAG-FedAvg &PAG & FedAvg & 0.7748   &0.7587 &0.7344   &0.7772   &0.733   &0.7902  &0.7609  &0.698 &0.7534& 37.5\%\\
            &GraphSAGE-FedAvg &GraphSAGE &FedAvg &0.7684  &0.7364&0.7723  &0.7678  &0.741  &0.7774  &0.7604 &0.7615&0.7606& 50\%\\
            &GCRN-FedAvg &GCRN & FedAvg &0.7473  &0.7096 &0.743  &0.7653   &0.7052  &0.7563  &\textbf{0.7631} &0.7604 &0.7448& 37.5\%\\
		& PFGL &STAGCN &PFL &\textbf{0.7749} &0.7643 &\textbf{0.7896} &\textbf{0.7783} & 0.7439&0.7862 &0.7495 &\textbf{0.7607}&\textbf{0.7684} &\textbf{75\%}\\		
		\hline \hline
	\end{tabular}}
    }
\end{table*}
\begin{table*}[t]
\centering
\color{black}{
\caption{Comparison of metrics for different FL and forecasting models on Shenzhen and UrbanEV datasets}
\label{QS_2}
\scalebox{0.8}{
    \begin{tabular}{l l l l | c c c | c c c}
    \hline \hline
              & & & & \multicolumn{3}{c|}{\textbf{Shenzhen Dataset}} & \multicolumn{3}{c}{\textbf{UrbanEV Dataset}} \\
              & Method & Forecasting Method & FL Method & QS & MIL & ICP & QS & MIL & ICP \\ 
    \hline
    \multirow{12}{*}{Step 1} 
        & Central & LSTM & --- & 1.5943 & 3.214 &0.646 & 2.4116 & 6.0447 & 0.9019\\
        & No\_FL & LSTM & --- & 18.6134 & 41.4446 & 0.7106 & 2.7700 & 9.3043 & 0.8582 \\
        & LSTM-FedAvg & LSTM & FedAvg & 17.7574 & 40.4155 & 0.7117 & 3.0580 & \textbf{7.2973} & 0.8597 \\
        & LSTM-CNN-FedAvg & LSTM-CNN & FedAvg & 18.7825 & 41.9897 & 0.7334 & 3.7973 & 8.6662 & 0.8505 \\
        & LSTM-FedProx & LSTM & FedProx & 8.4304 & 7.1159 & 0.7495 & 3.0794 & 7.3984 & 0.8606 \\
        & LSTM-pFedMe & LSTM & pFedMe & 8.7668 & 6.2357 & 0.7175 & 4.0596 & 9.3959 & 0.8445 \\
        & LSTM-PFL & LSTM & PFL & 12.6632 & 21.4935 & 0.5194 & 3.0366 & 10.8336 & 0.8581 \\
        & LSTM-CNN-PFL & LSTM-CNN & PFL & 15.3383 & 45.4558 & \textbf{0.8444} & 3.0787 & 11.4880 & 0.8512 \\
        & CNFGNN & CNFGNN & CNFGNN & 7.9611 & 4.0114 & 0.6694 & 2.3151 & 24.6905 & 0.8408 \\
        & PAG-FedAvg & PAG & FedAvg & 11.2940 & 17.0661 & 0.7108 & 3.5835 & 34.6497 & 0.7594 \\
        & GraphSAGE-FedAvg & GraphSAGE & FedAvg & 7.3219 & 3.3621 & 0.6830 & 1.3156 & 14.6101 & 0.8345 \\
        & GCRN-FedAvg & GCRN & FedAvg & 7.3380 & 3.6085 & 0.6792 & 2.6515 & 28.5006 & 0.8350 \\
        & PFGL & STAGCN & PFL & \textbf{6.8046} & \textbf{3.1554} & 0.7126 & \textbf{1.3115} & 13.6568 & \textbf{0.8343} \\
    \hline
    \multirow{12}{*}{Step 6} 
        & Central & LSTM & --- & 3.9787 & 12.9453 & 0.8173 & 4.4774 & 9.0636 & 0.5633\\
        & No\_FL & LSTM & --- & 18.6698 & 39.8802 & 0.6988 & 5.4571 & \textbf{13.4520} & 0.8233 \\
        & LSTM-FedAvg & LSTM & FedAvg & 18.7478 & 41.1482 & 0.7234 & 6.1992 & 14.8294 & 0.8438 \\
        & LSTM-CNN-FedAvg & LSTM-CNN & FedAvg & 18.8383 & 40.1050 & 0.7016 & 6.7410 & 15.6433 & 0.8345 \\
        & LSTM-FedProx & LSTM & FedProx & 10.4907 & 11.9989 & 0.7583 & 6.3469 & 14.8578 & 0.8421 \\
        & LSTM-pFedMe & LSTM & pFedMe & 10.8816 & 10.7329 & 0.7257 & 7.1868 & 17.2460 & 0.8101 \\
        & LSTM-PFL & LSTM & PFL & 15.5211 & 42.9448 & \textbf{0.8138} & 6.1087 & 17.9927 & 0.7881 \\
        & LSTM-CNN-PFL & LSTM-CNN & PFL & 16.4544 & 46.1300 & 0.8206 & 6.0200 & 24.9813 & 0.8366 \\
        & CNFGNN & CNFGNN & CNFGNN & 10.0115 & 8.9972 & 0.6692 & 3.5114 & 32.0959 & 0.7922 \\
        & PAG-FedAvg & PAG & FedAvg & 13.7907 & 13.0350 & 0.4305 & 3.5761 & 34.0180 & 0.7284 \\
        & GraphSAGE-FedAvg & GraphSAGE & FedAvg & 9.6819 & 8.6313 & 0.7111 & 2.3674 & 24.6129 & 0.7925 \\
        & GCRN-FedAvg & GCRN & FedAvg & 9.9119 & 9.7647 & 0.6754 & 3.4435 & 33.8083 & 0.7887 \\
        & PFGL & STAGCN & PFL & \textbf{9.4411} & \textbf{8.5866} & 0.7002 & \textbf{2.3077} & 23.5551 & \textbf{0.7944} \\		
    \hline \hline
    \end{tabular}
} 
}
\end{table*}
\begin{figure*}
    \centering
    \includegraphics[width=16cm,height=5cm]{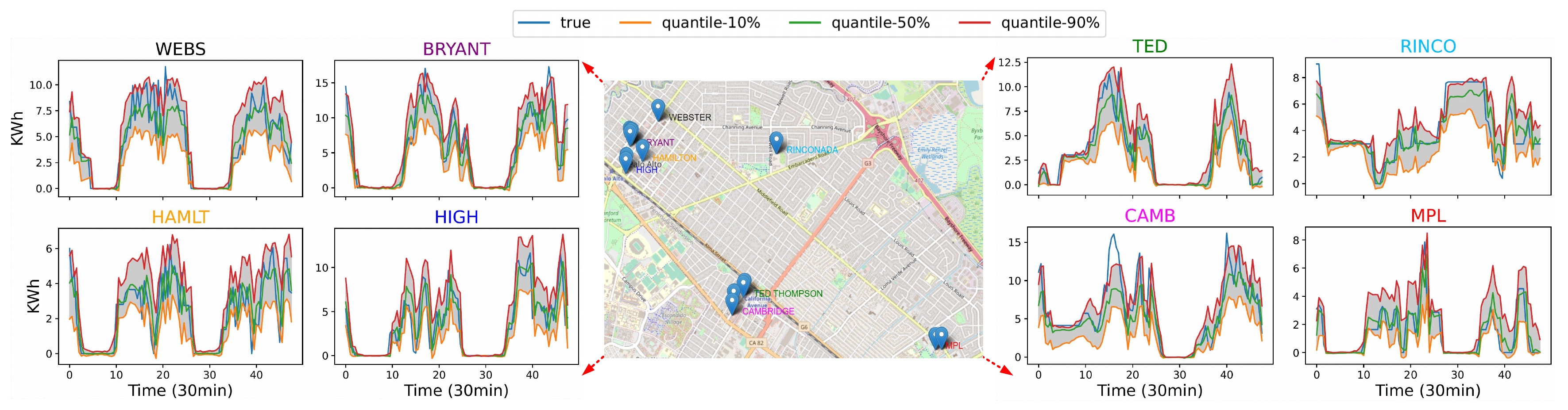}
    \setlength{\abovecaptionskip}{-0.28cm}
    \caption{Prediction result of PFGL model (6 step prediction) for stations (a) WEBS, (b) BRYANT, (c) TED, (d) RINCO, (e) HAMLT, (f) HIGH, (g) CAMB, (h) MPL.}
    \label{fig:LSTM_pred}
\end{figure*}

\subsection{Robustness to Adversarial Clients}
\textcolor{black}{To enhance robustness against cyberattack in federated learning, }
the proposed PFL, each client will be given a personalized aggregated model by attributing different weights to the clients' model, where the weights are calculated based on the similarity between the target client and other clients.  In such a manner, clients with parameters that differ much from the target client will be attributed to less weight when performing aggregation. Naturally, when a malicious participant sends an arbitrary local model, the local model will be attributed to small or even near zero weight when it is used to aggregate the global model of another client. To testify the proposed method under false data injection attack, one of the clients is selected to send manipulated models, while three kinds of manipulating methods \cite{10552428} including parameter flipping, scaling, and Gaussian noise  are carried out to generate the threat model, as indicated in (Eq. \eqref{eqn:attack_manner}); 
\vspace{-3pt}
\begin{align}\label{eqn:attack_manner}
\bold{w}_i=
\begin{cases}
    -\bold{w}_i & \text{flipping},  \\
    scale*\bold{w}_i & \text{scaling}, \\
    \mathcal{N}(0,\epsilon\bold{I}_d) & \text{Gaussian noise}.
\end{cases}
\end{align}

\vspace{-2pt}

\begin{table*}[]
    \centering
    \color{black}{
    \caption{Average QS for different FL under different attacks (prediction step=6) for Palo Alto dataset}
    \vspace{-0.4cm}
    \label{table:attack_QS}
    \scalebox{0.75}{
    \begin{tabular}{l l l c |c  c c | c c c| c c c} 
        \toprule
        & & & &\multicolumn{3}{c|}{\textbf{Flipping Attack}} & \multicolumn{3}{c|}{\textbf{Scaling Attack}} & \multicolumn{3}{c}{\textbf{Gaussian Noise Attack}} \\
        \textbf{Method}&\textbf{Forecasting Model} & \textbf{FL Method} & \textbf{No Attack} & \textbf{1 Station} & \textbf{4 Stations} & \textbf{8 Stations} & \textbf{1 Station} & \textbf{4 Stations} & \textbf{8 Stations} & \textbf{1 Station} & \textbf{4 Stations} & \textbf{8 Stations} \\
        \midrule
        No\_FL  &LSTM      & ---      & 1.4846 & ---    & ---    & ---    & ---    & ---    & ---    & ---    & ---    & ---   \\ %
        \midrule
        LSTM-FedAvg &LSTM          & FedAvg   & 1.4846 & 2.3646 & 4.2189 & 4.4764 & 1.4867 & 1.5302 & 1.5519 & 1.4907 & 1.5302 & 1.5519 \\
        LSTM-CNN-FedAvg &LSTM-CNN      & FedAvg   & 1.4267 & 1.4262 & 1.4262 & 1.4262 & 1.4275 & 1.4275 & 1.4275 & 1.4225 & 1.4240 & 1.4263 \\
        \textcolor{black}{LSTM-FedProx} &\textcolor{black}{LSTM}          & \textcolor{black}{FedProx}  &\textcolor{black}{1.4217}   & \textcolor{black}{2.77} & \textcolor{black}{4.2071} & \textcolor{black}{4.2238} 
 & \textcolor{black}{1.6961} & \textcolor{black}{1.6939} & \textcolor{black}{1.7542} & \textcolor{black}{1.9595} & \textcolor{black}{4.002} & \textcolor{black}{4.083} \\
        LSTM-pFedme &LSTM & pFedme   & 1.3892  & 2.9572 & 4.2064 & 4.2315 & 1.6961 & 1.6939 & 1.7542 & 2.4879 & 4.041 &  4.0898 \\
        LSTM-PFL &LSTM          & PFL      & 1.4588 & 1.4351 & 1.4775 & 1.4382 & 1.4370 & 1.4268 & 1.4226 & 1.4383 & 1.4460 & 1.4352 \\
        LSTM-CNN-PFL &LSTM-CNN      & PFL      & 1.4252 & 1.4262 & 1.4262 & 1.4262 & 1.4510 & 1.4510 & 1.4510 & 1.4225 & 1.4240 & 1.4263 \\
        CNFGNN &CNFGNN        & CNFGNN   & 1.3525 & 1.3510 & 1.3562 & 1.3562 & 1.3562 & 1.3561 & \textbf{1.3378} & 1.3668 & 2.4749 & 1.3601 \\ 
        PAG-FedAvg&PAG           & FedAvg   & 1.5304 & 1.3877 & 1.3877 & 1.3877 & 1.3877 & 1.3877 & 1.3877 & 1.6238 & 1.4877 & 1.5218 \\
        GraphSAGE-FedAvg&GraphSAGE     & FedAvg   & 1.3504 & 1.4343 & 1.4343 & 1.4343 & 1.3399 & 1.3426 & 1.4343 & 1.3469 & \textbf{1.3322} & 1.3386 \\
        GCRN-FedAvg &GCRN          & FedAvg   & 1.4219 & 1.4343 & 1.4343 & 1.4343 & 1.4343 & 1.4343 & 1.3415 & 1.4340 & 1.4334 & 1.4256 \\
        PFGL    &STAGCN      & PFL      & \textbf{1.318} & \textbf{1.3377} & \textbf{1.3380} & \textbf{1.3407} & \textbf{1.3411} & \textbf{1.3415} & 1.3433 & \textbf{1.3458} & 1.3397 & \textbf{1.3370} \\
        \bottomrule
    \end{tabular}%
    }
    }
\end{table*}
Since the goal of the task is to forecast the demand for EV, we use the average QS loss as the metric to evaluate its effectiveness of the proposed model to counteract malicious clients.
The attack results of the baselines and the proposed PFGL are shown in Table \ref{table:attack_QS} (Palo Alto dataset) and Table \ref{tab:attack_methods_models_shenzhen} (Shenzhen \textcolor{black}{and UrbanEV} dataset). 
\textcolor{black}{Standard FedAvg-based models exhibit severe fragility under large-scale attacks, often failing to converge due to the direct impact of malicious updates on the global model. Notably, SOTA PFL like FedProx and pFedMe frequently underperform naive FedAvg in adversarial settings. While their proximal and regularization terms are designed to handle non-IID data.  
FedProx and pFedMe are overly sensitive to local variance, they treat adversarial noise as legitimate local features, leading to catastrophic overfitting to the attack signals. }

\textcolor{black}{In contrast, PFGL consistently achieves the lowest or near-lowest QS scores across both datasets and all attack scenarios, maintaining stability even with 200 (250) malicious clients. While SOTA PFL methods inadvertently treat adversarial noise as local features, PFGL’s trust-aware design effectively distinguishes legitimate heterogeneity from malicious influence.}

The robustness of this credit-based aggregation is visualized in Fig. \ref{fig:HIGH_attack}. Whether attacks occur every 5 rounds (a) or 20 rounds (b), the algorithm effectively detects malicious submissions and resets the corresponding client weight to zero. For instance, in the 5th round, when the client 'HAMILTON' submits a poisoned model, PFGL identifies the behavior and excludes its contribution from global aggregation.

Compared to other graph-based models (e.g., CNFGNN, GraphSAGE) which offer only partial resilience, or PFL baselines (e.g., LSTM-PFL) that survive divergence but still suffer significant degradation, PFGL maintains consistent resilience. It is also important to note that some baselines appearing "unaffected" by attacks are often artifacts of early stopping or pre-attack checkpoint selection. PFGL’s ability to match or exceed local baseline performance under attack confirms that its adaptive defense is genuinely robust, neutralizing malicious influence without compromising point-forecast accuracy.
\begin{table*}[t]
\centering
\color{black}{
\caption{QS, MIL, and ICP under different attack methods and models on Shenzhen and UrbanEV datasets (Step=6)}
\label{tab:attack_methods_models_shenzhen}
\setlength{\tabcolsep}{3pt} 
\scalebox{0.68}{
\begin{tabular}{clcc | ccc | ccc | ccc | ccc}
\hline \hline
& & & & \multicolumn{6}{c|}{\textbf{Shenzhen Dataset}} & \multicolumn{6}{c}{\textbf{UrbanEV Dataset}} \\ \cline{5-16}
\textbf{Attack} & \textbf{Method} & \textbf{Model} & \textbf{FL Method} & \multicolumn{3}{c|}{\textbf{Attack Station Num = 100}} & \multicolumn{3}{c|}{\textbf{Attack Station Num = 200}} & \multicolumn{3}{c|}{\textbf{Attack Station Num = 125}} & \multicolumn{3}{c}{\textbf{Attack Station Num = 250}} \\
 & & & & \textbf{QS} & \textbf{MIL} & \textbf{ICP} & \textbf{QS} & \textbf{MIL} & \textbf{ICP} & \textbf{QS} & \textbf{MIL} & \textbf{ICP} & \textbf{QS} & \textbf{MIL} & \textbf{ICP} \\ \hline
Normal & No\_FL & LSTM & Local & 18.6698 & 39.8802 & 0.6988 & 18.6698 & 39.8802 & 0.6988 & 5.4571 & 13.4520 & 0.8233 & 5.4571 & 13.4520 & 0.8233 \\ \hline
\multirow{11}{*}{Flipping} 
& LSTM-FedAvg & LSTM & FedAvg & NaN & NaN & NaN & NaN & NaN & NaN & NaN & NaN & NaN & NaN & NaN & NaN \\
& LSTM-CNN-FedAvg & LSTM-CNN & FedAvg & NaN & NaN & NaN & NaN & NaN & NaN & NaN & NaN & NaN & NaN & NaN & NaN \\
& LSTM-FedProx & LSTM & FedAvg & NaN & NaN & NaN & NaN & NaN & NaN & NaN & NaN & NaN & NaN & NaN & NaN \\
& LSTM-pFedMe & LSTM & FedAvg & NaN & NaN & NaN & NaN & NaN & NaN & NaN & NaN & NaN & NaN & NaN & NaN \\
& LSTM-PFL & LSTM & PFL & 18.3849 & 41.1156 & 0.6958 & 18.4791 & 39.2128 & 0.6899 & 6.1765 & \textbf{16.0223} & 0.8343 & 7.1656 & \textbf{17.8819} & 0.8072 \\
& LSTM-CNN-PFL & LSTM-CNN & PFL & 18.7778 & 40.1423 & 0.6920 & 18.6476 & 38.1997 & 0.6855 & 7.0703 & 19.5797 & 0.8303 & 6.7033 & 18.3378 & \textbf{0.8066} \\
& CNFGNN & CNFGNN & CNFGNN & 9.6492 & 13.0671 & 0.7080 & 9.6492 & 13.0671 & 0.7080 & 3.5513 & 29.2751 & 0.7649 & 3.4103 & 27.0044 & 0.7779 \\
& PAG-FedAvg & PAG & FedAvg & 8.2155 & 15.3904 & 0.6888 & 8.2155 & 15.3904 & 0.6888 & 3.7267 & 33.6180 & 0.7660 & 3.5982 & 31.8712 & 0.7753 \\
& GraphSAGE-FedAvg & GraphSAGE & FedAvg & 8.2155 & 15.3904 & 0.6888 & 8.2155 & 15.3904 & 0.6888 & 2.4554 & 24.6144 & \textbf{0.8035} & 2.3391 & 21.7924 & 0.8411 \\
& GCRN-FedAvg & GCRN & FedAvg & 10.1057 & 12.5223 & 0.7110 & 10.1057 & 12.5223 & \textbf{0.7110} & 2.5171 & 21.9923 & 0.8245 & 2.4908 & 21.7881 & 0.8319 \\
& PFGL-PFL & STAGCN & PFL & \textbf{7.2311} & \textbf{10.8379} & \textbf{0.7177} & \textbf{7.1906} & \textbf{10.2700} & 0.7040 & \textbf{2.3227} & 22.1071 & 0.8321 & \textbf{2.3293} & 21.7721 & 0.8390 \\ \hline
\multirow{11}{*}{Gaussian} 
& LSTM-FedAvg & LSTM & FedAvg & NaN & NaN & NaN & NaN & NaN & NaN & NaN & NaN & NaN & NaN & NaN & NaN \\
& LSTM-CNN-FedAvg & LSTM-CNN & FedAvg & NaN & NaN & NaN & NaN & NaN & NaN & NaN & NaN & NaN & NaN & NaN & NaN \\
& LSTM-FedProx & LSTM & FedProx & NaN & NaN & NaN & NaN & NaN & NaN & NaN & NaN & NaN & NaN & NaN & NaN \\
& LSTM-pFedMe & LSTM & FedProx & NaN & NaN & NaN & NaN & NaN & NaN & NaN & NaN & NaN & NaN & NaN & NaN \\
& LSTM-PFL & LSTM & PFL & 18.3849 & 41.1156 & 0.6958 & 16.9948 & 39.1417 & \textbf{0.7404} & 6.7270 & 14.6088 & 0.7745 & 6.9631 & \textbf{18.0858} & 0.8474 \\
& LSTM-CNN-PFL & LSTM-CNN & PFL & 18.7778 & 40.1423 & 0.6920 & 17.5545 & 41.4111 & \textbf{0.7404} & 7.0686 & 17.7414 & 0.8289 & 8.0409 & 20.8103 & 0.7569 \\
& CNFGNN & CNFGNN & CNFGNN & 9.9756 & 8.7336 & 0.6922 & 9.9532 & 9.1456 & 0.7042 & 3.4065 & 26.9842 & 0.7641 & 3.4545 & 27.0185 & 0.7600 \\
& PAG-FedAvg & PAG & FedAvg & 13.7947 & 10.5661 & 0.2634 & 14.0114 & 10.8334 & 0.2847 & 3.6003 & 31.4088 & 0.7759 & 3.6211 & 31.9701 & 0.7603 \\
& GraphSAGE-FedAvg & GraphSAGE & FedAvg & 9.6007 & 10.4636 & 0.7012 & 9.5381 & 11.6948 & 0.7084 & 2.5053 & 24.8911 & 0.8535 & 3.4065 & 26.9842 & 0.7441 \\
& GCRN-FedAvg & GCRN & FedAvg & 10.6089 & 8.9200 & 0.6950 & 10.3309 & 10.1570 & 0.6713 & 2.5171 & 21.9923 & 0.8245 & 2.4832 & 22.1631 & 0.8455 \\
& PFGL & STAGCN & PFL & \textbf{9.5335} & \textbf{8.6638} & \textbf{0.7121} & \textbf{8.4219} & \textbf{8.7488} & 0.7063 & \textbf{2.3061} & 21.8108 & \textbf{0.8209} & \textbf{2.3275} & 21.6271 & \textbf{0.8396} \\ \hline
\multirow{11}{*}{Scaling} 
& LSTM-FedAvg & LSTM & FedAvg & 15.0425 & 44.8933 & 0.8761 & 45.9425 & 88.1984 & 0.5929 & 6.4570 & 14.9549 & 0.8612 & 6.7472 & 14.3922 & 0.8424 \\
& LSTM-CNN-FedAvg & LSTM-CNN & FedAvg & 16.3067 & 52.0474 & 0.8751 & 45.0545 & 88.2608 & 0.6080 & 6.6924 & 15.7140 & 0.8590 & 7.0149 & 16.1381 & 0.8425 \\
& LSTM-FedProx & LSTM & FedProx & 18.7321 & 50.7004 & 0.8299 & 18.9580 & 56.7669 & 0.8772 & 6.5173 & 15.1634 & 0.8615 & 6.9076 & 15.2550 & 0.8471 \\
& LSTM-pFedMe & LSTM & FedProx & 11.9964 & 13.0272 & 0.7169 & 13.4367 & 14.0983 & 0.6283 & 7.2161 & 20.5345 & 0.8910 & 7.7030 & 22.4613 & 0.9069 \\
& LSTM-PFL & LSTM & PFL & 15.9723 & 42.6293 & 0.7739 & 18.8130 & 44.0689 & 0.7078 & 6.5139 & 16.9236 & 0.7581 & 6.5514 & 17.7666 & 0.8377 \\
& LSTM-CNN-PFL & LSTM-CNN & PFL & 16.7819 & 44.7879 & \textbf{0.7750} & 18.6233 & 42.1417 & 0.7114 & 6.7555 & 17.6447 & 0.8461 & 7.9353 & 20.3990 & 0.7605 \\
& CNFGNN & CNFGNN & CNFGNN & 9.6492 & 13.0671 & 0.7080 & 9.6492 & 13.0671 & 0.7080 & 3.4103 & 27.0044 & 0.7679 & 3.4545 & 27.0185 & 0.7600 \\
& PAG-FedAvg & PAG & FedAvg & 8.2155 & 15.3904 & 0.6888 & 8.2155 & 15.3904 & 0.6888 & 3.5982 & 31.8712 & 0.7623 & 3.7279 & 33.6345 & 0.7618 \\
& GraphSAGE-FedAvg & GraphSAGE & FedAvg & 8.2155 & 15.3904 & 0.6888 & 8.2155 & 15.3904 & 0.6888 & 2.3314 & 21.7452 & 0.8382 & 2.3354 & 21.7724 & 0.8479 \\
& GCRN-FedAvg & GCRN & FedAvg & 10.1057 & 12.5223 & 0.7010 & 10.1057 & 12.5223 & 0.7010 & 2.4908 & 21.7880 & 0.8419 & 2.4908 & 21.7880 & 0.8419 \\
& PFGL & STAGCN & PFL & \textbf{6.9592} & \textbf{10.4090} & 0.7085 & \textbf{6.7242} & \textbf{10.1656} & \textbf{0.7116} & \textbf{2.3175} & \textbf{21.6444} & 0.8376 & \textbf{2.3382} & \textbf{21.6945} & \textbf{0.8412} \\ \hline \hline
\end{tabular}
}
}
\end{table*}

\begin{figure}[h]
    \centering
    \includegraphics[width=9cm,height=6cm]
    {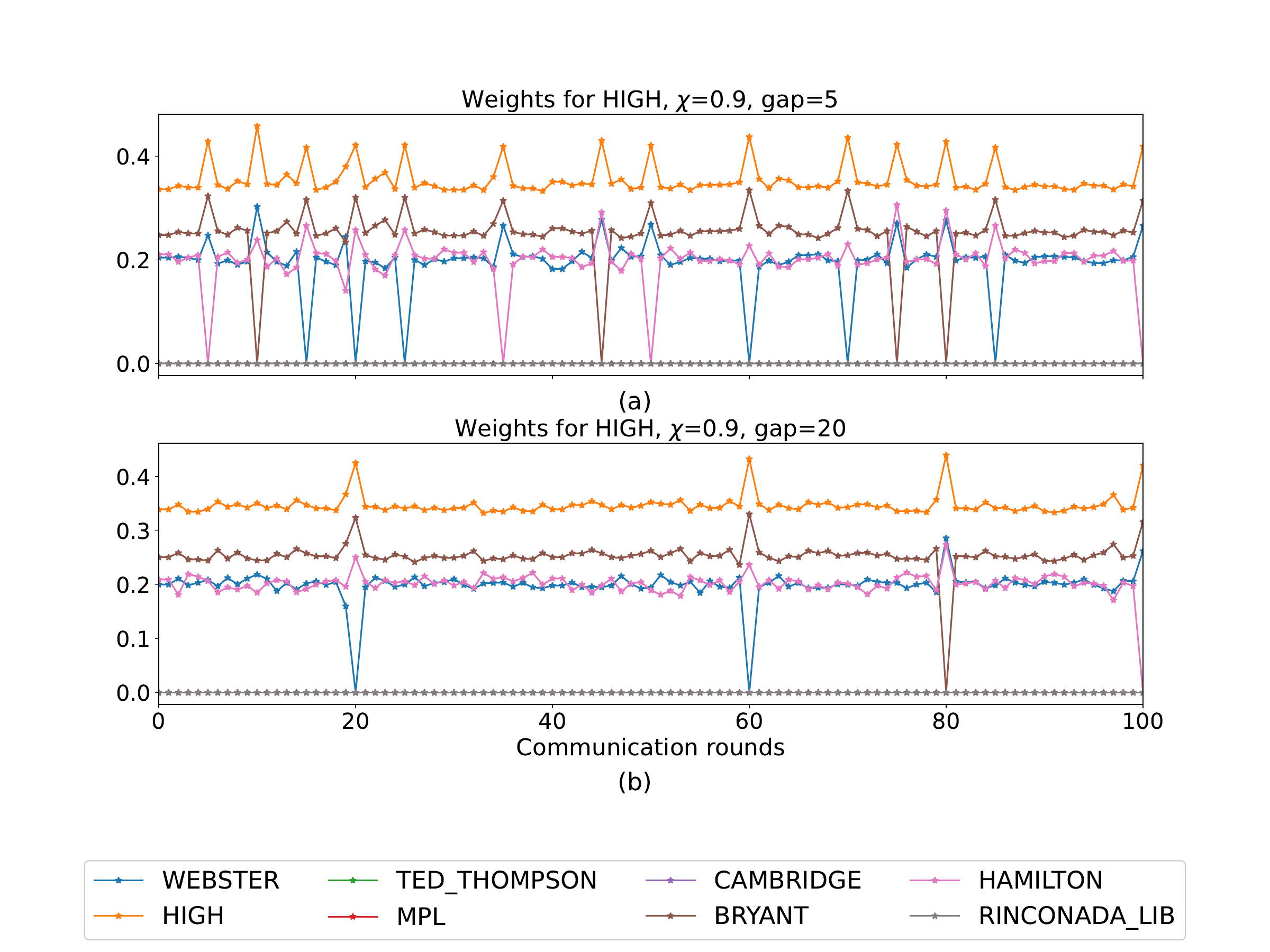}
    \setlength{\abovecaptionskip}{-0.4cm}
    \caption{Weight $\xi_{i,j}$ attributed to 'HIGH' station (a) attack happens every 5 rounds (b) attack happens every 20 rounds}
    \label{fig:HIGH_attack}
\end{figure}
\subsection{Adaptability to Benign Client Heterogeneity}
Beyond defending against explicitly malicious clients, federated systems must also remain robust to benign clients with unique, outlier behaviors due to regional or operational differences. To further validate PFGL's resilience in such heterogeneous settings, we conducted a case study using the Palo Alto dataset, focusing on clients exhibiting distinctive but non-malicious demand patterns.

As shown in Fig. \ref{fig:LSTM_pred} and Fig. \ref{fig:demand_distribution}, the RINCO station exhibits a noticeably different demand distribution compared to other stations. Despite this divergence, the RINCO station was not subjected to any attacks. 
Fig. \ref{fig:heatmap1} presents the pairwise model similarity matrix $\boldsymbol{\lambda}$ obtained by PFGL. Although RINCO shows overall lower similarity to most clients, it maintains relatively higher similarity with BRYANT and HAMILTON. This partial alignment, combined with our aggregation strategy, ensures that RINCO is correctly recognized as benign despite its distinct characteristics.  Importantly, in our method, model similarity constitutes only one component of the aggregation and anomaly assessment. As detailed in Section III (Eq. \eqref{channel_weight}), PFGL integrates both model-based similarity and spatial-based similarity, modulated via the channel weight $\alpha$. This dual-factor mechanism reduces over-reliance on any single indicator and inherently mitigates the risk of falsely penalizing benign but unique clients.

Overall, these results provide strong empirical evidence that PFGL robustly tolerates natural heterogeneity among clients and maintains accurate detection  under non-i.i.d. conditions.

\begin{figure*}
    \centering
    \includegraphics[width=16cm,height=5cm]{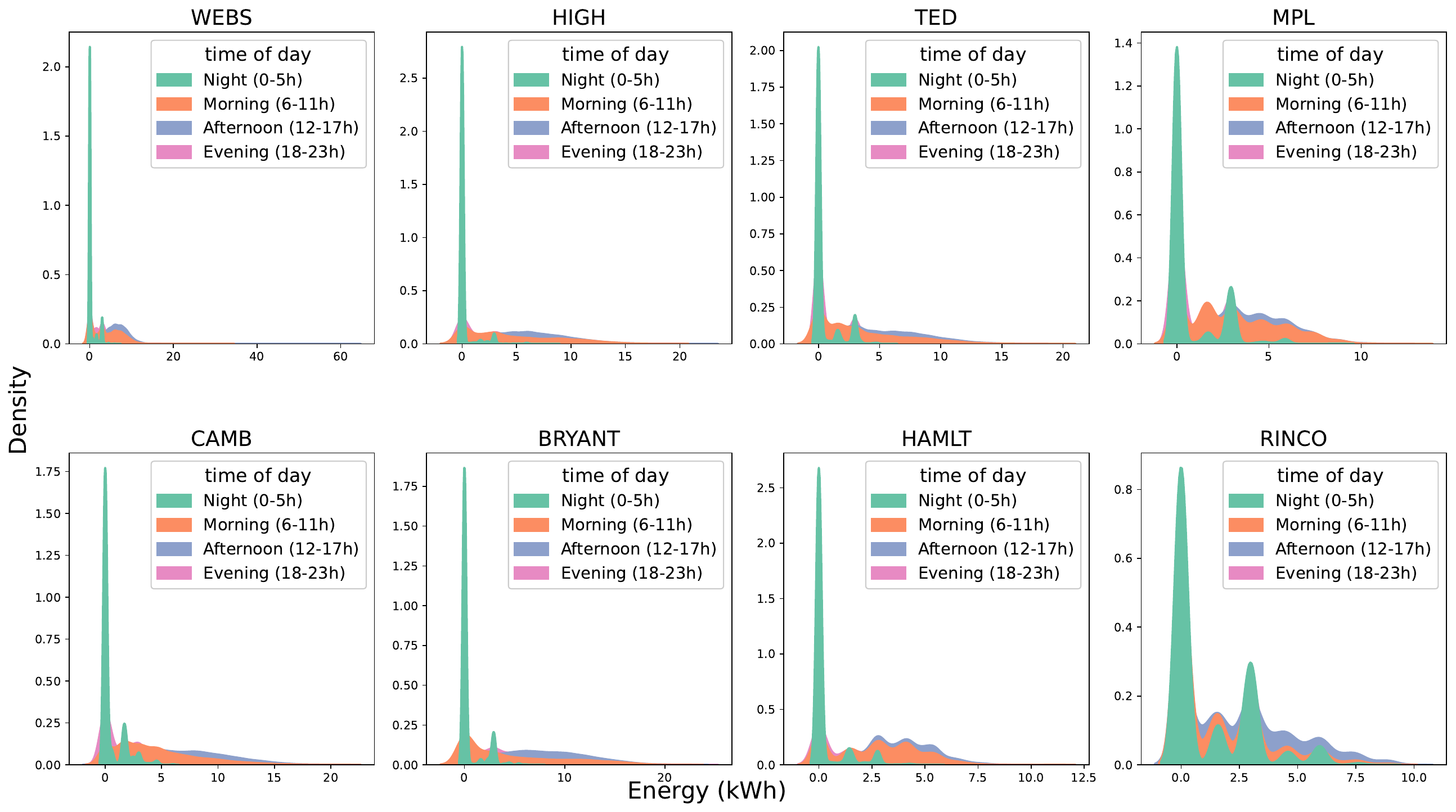}
    \caption{Demand distribution of the stations.}
    \label{fig:demand_distribution}
\end{figure*}

\begin{figure}
    \centering
    \includegraphics[width=6cm,height=5cm]{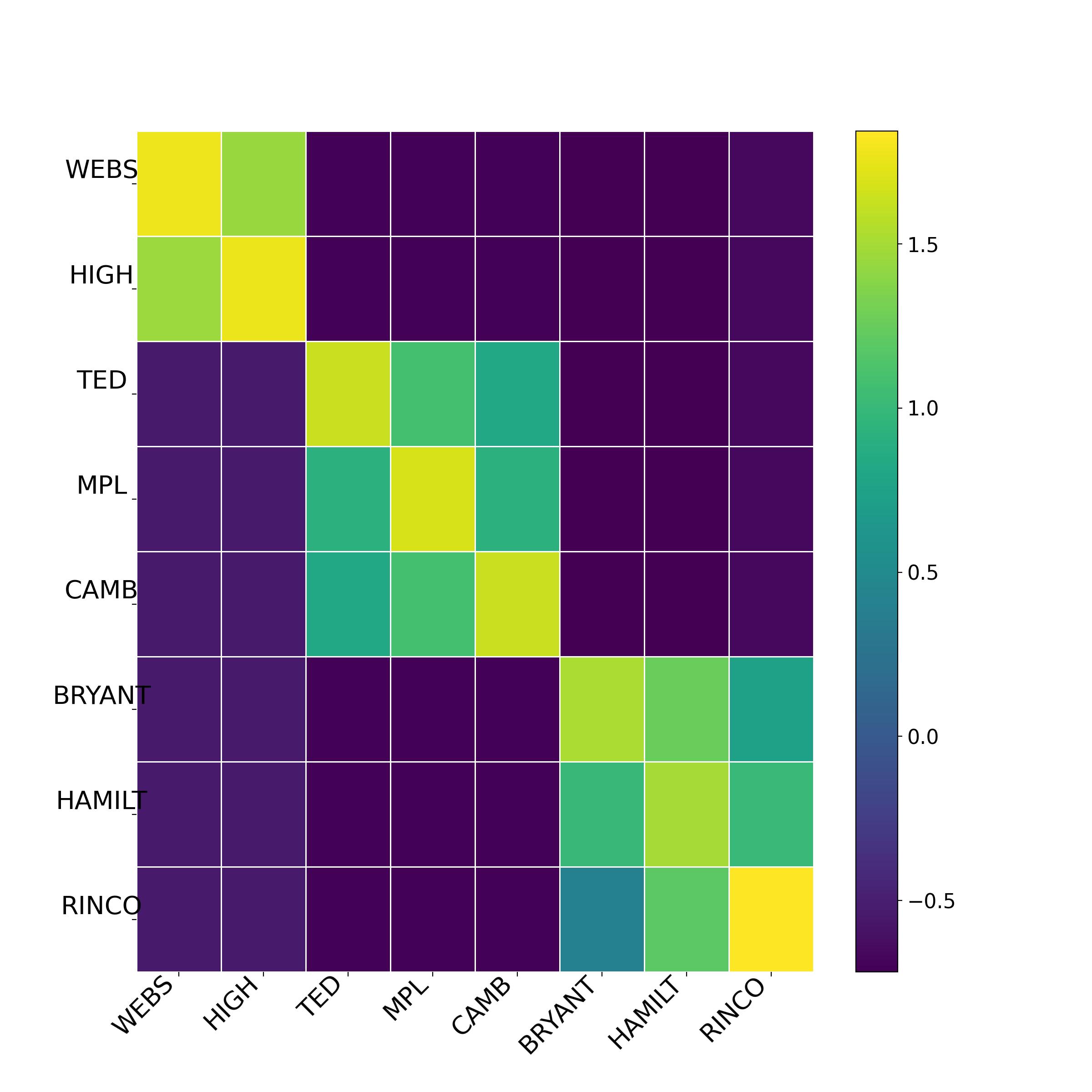}
    \setlength{\abovecaptionskip}{-0.28cm}
    \caption{Heatmap of global attention for Palo Alto dataset}
    \label{fig:heatmap1}
\end{figure}

\subsection {Influence of the Credict Parameter $\chi_i$}
In this paper, $\chi_i$ is an important parameter that client $i$ can use to control the limit of similarity value with other clients. If $\chi_i$ is extremely small, i.e. $\chi_i \rightarrow 0$, which means that client $i$ does not trust any other clients that participate in the FL, and thus it is the same as the local training scenario. On the other hand, if $\chi_i \rightarrow 1$, it means that client $i$ totally trusts at least one client that owns the same data distribution as itself. To investigate the influence of $\chi_i$, for dataset Palo Alto, weight $\xi_{i,j}$ within 100 communication rounds are shown in Fig. \ref{fig:HIGH}. In this case, 'HIGH' station is selected as client $i$, and every 20 communication rounds, one client is designated as malicious, sending a poisoned model to the server. As illustrated in Fig. \ref{fig:HIGH}, three clients possess data distributions similar to 'HIGH' during the training process, under different credit parameters $\chi_i$. For various values of $\chi_i$, client 'HIGH' itself maintains the highest weights $\xi_{i,j}$, which is intuitive as each client trusts itself most. Notably, the weight attributed to itself decreases as $\chi$ increases, indicating greater credit allocation to other clients. When a client transmits malicious information, it is detected by the proposed algorithm, and its weight attributed to 'HIGH' is reset to zero. For example, at the 20th communication round, 'WEBSTER' transmitted a poisoned model to the server for 'HIGH's global model aggregation, resulting in its weight being reduced to nearly zero. We can conclude that the credit-based method demonstrates high effectiveness, consistently excluding malicious clients while assigning different weights based on their contribution/similarity to the selected client and a self-tuned parameter $\chi_i$.



\begin{figure}
    \centering
    \includegraphics[width=9cm,height=6cm]{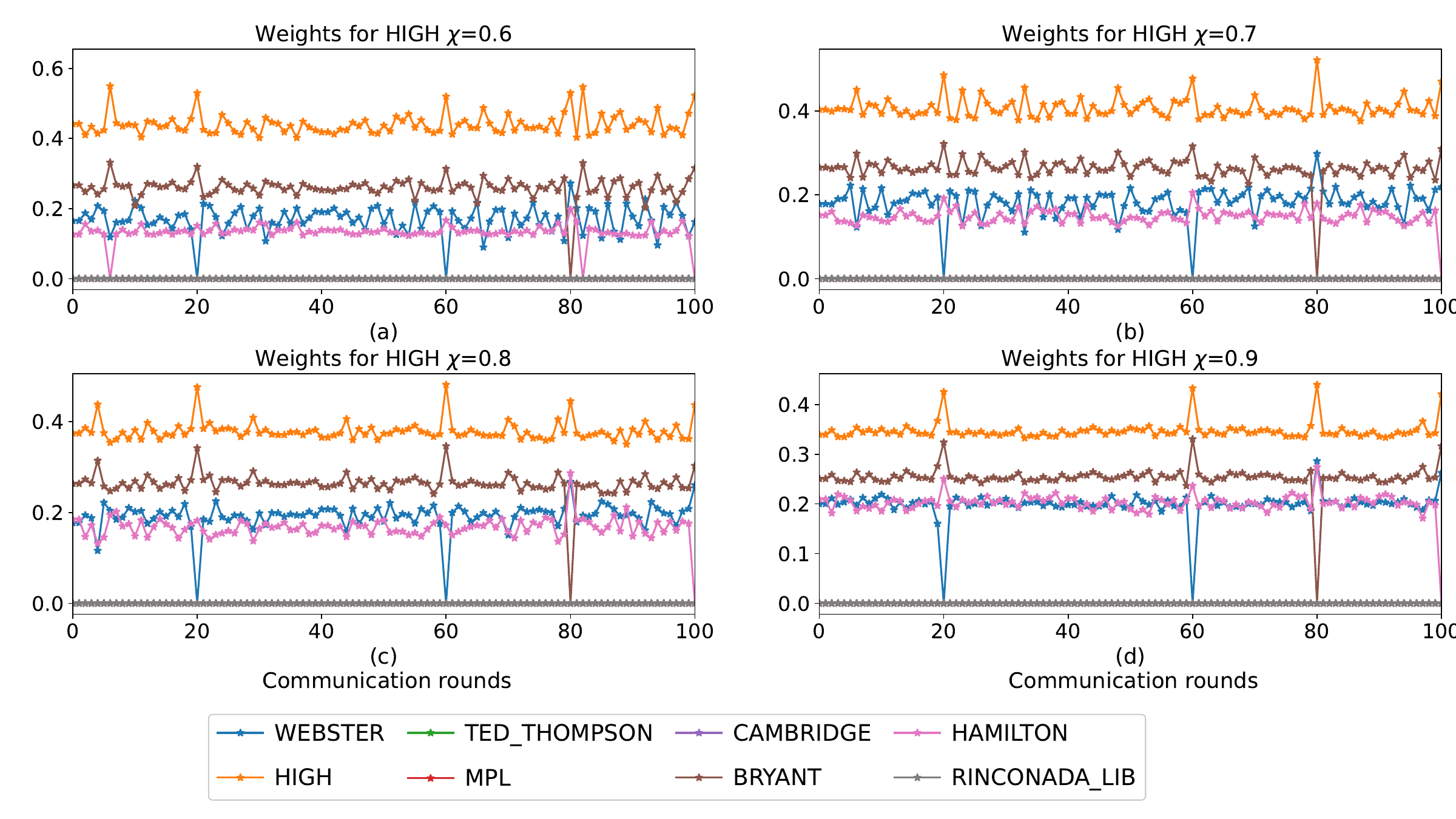}
    \setlength{\abovecaptionskip}{-0.28cm}
    \caption{Weight $\xi_{i,j}$ attributed to 'HIGH' station (a) $\chi=0.6$. (b) $\chi=0.7$, (c) $\chi=0.8$, (d) $\chi=0.9$.}
    \label{fig:HIGH}
\end{figure}

\section{\textcolor{black}{Discussion}}
\label{discussion}
\textcolor{black}{The results show that PFGL improves both forecasting accuracy  under non-IID and adversarial conditions. 
These improvements are not limited to forecasting accuracy, but directly impact the operation of ITS, where charging demand uncertainty and data heterogeneity remain major challenges.}

\textcolor{black}{At the operational level, more accurate and personalized forecasts enable reliable load scheduling, reduce transformer overload risks, and lower operating costs through better reserve planning and dynamic pricing. }

\textcolor{black}{At the planning level, PFGL helps identify long-term spatial–temporal demand patterns for effective charger deployment and grid expansion, while quantified uncertainty bounds reduce over-design and improve coordination with renewable energy resources.}

\textcolor{black}{Despite these advantages, several limitations remain. PFGL relies on the availability of sufficiently representative historical demand and spatial information; abrupt changes in user behavior, policy, or charging technologies may reduce predictive performance. Furthermore, the current uncertainty modeling is based on quantile estimation and does not explicitly capture evolving distributional dynamics over time. These limitations motivate future work on adaptive learning mechanisms and variance-aware or Bayesian extensions to further strengthen robustness and long-term reliability.}

\section{Conclusions}
\label{section5}
In this paper, we propose a personalized federated graph learning (PFGL) framework for multi-horizon EV charging demand forecasting, designed to enhance the operational intelligence of ITS infrastructure while ensuring data privacy and security. By modeling each charging station as a node within a spatiotemporal graph-structured network, our approach captures complex inter-station correlations critical for urban mobility management. The framework integrates Graph Neural Networks (GNNs) into a federated learning paradigm, employing a similarity-based aggregation mechanism and a credit-based adaptive weighting strategy to address the heterogeneous charging patterns inherent in diverse urban locales. Theoretical convergence is established, and extensive experiments on real-world datasets demonstrate that PFGL significantly improves prediction accuracy and provides robust defense against adversarial cyberattacks in connected transportation environments.
\textcolor{black}{Future work will extend PFGL to multi-task and cross-domain smart energy-transportation applications, further refining the management of heterogeneous demand via adaptive personalization and client clustering. Additionally, we plan to integrate variance-aware uncertainty estimation (e.g., Bayesian GNNs or dropout calibration) to provide more reliable interval predictions, thereby supporting more resilient decision-making in large-scale intelligent mobility ecosystems.}
\vspace{-0.6cm}
\bibliographystyle{IEEEtran}
\bibliography{reference}

\newpage
\newpage
\clearpage 
\appendices
\section{Convergence Analysis}
This subsection provides the convergence analysis of the proposed method.For convenience of convergence analysis, the following assumptions are made.
\begin{assumption}\label{asp:L-smooth}
Function $f_i$ is L-smooth for $i \in [N]$, i.e., there exists a constant $L$, for all  $\bold{x}, \bold{y} \in \mathbb{R}^d$, $f_i(\bold{y}) \le f_i(\bold{x})+(\bold{y-x})^T\nabla f_i(\bold{x})+\frac{L}{2}\|\bold{y-x}\|^2$.
\end{assumption}

\begin{assumption}\label{asp:bounded_grad}
The gradient of $f_i$ is bounded, i.e., $\|\nabla f_i(\bold{w}_i(\tau))\|\le M$ for every $\tau >0$.
\end{assumption}

\begin{assumption}\label{asp:bounded_diff}
The difference between the global model and local model of the client $i$ is bounded for $i \in [N]$, i.e., there exists a constant $Q$ such that $\|\bold{w}_i(\tau)-\bold{w}^G_i(\tau)\|\le \chi_i \eta_{\tau} Q$ for every $t >0$, where $\eta_{\tau}$ is the local learning rate.
\end{assumption}

{\color{black}
\begin{assumption}\label{asp:lipschitz_encoder}
(Lipschitz Continuity of Hidden Representation Encoder) The encoder function that maps raw data to hidden representations is Lipschitz continuous with constant $\rho_h$. Formally, $\forall \textbf{\textit{X}}^{h}_{i}, \textbf{\textit{X}}^{h'}_{i}$, $\exists \rho_h > 0$ such that:
$$\|H_{i}^{h} - H_{i}^{h'}\| \leq \rho_h \|\textbf{\textit{X}}^{h}_{i} - \textbf{\textit{X}}^{h'}_{i}\|$$
\end{assumption}

\begin{assumption}\label{asp:bounded_rep_error}
(Bounded Hidden Representation Estimation Error) When client $j$'s hidden representation is estimated by client $i$ through the aggregation process, the mean squared error is bounded:
$$\mathbb{E}\Big[\|{\hat{H}}^{h}_{j \rightarrow i} - H^{h}_{j}\|^2\Big] \leq \delta_H^2$$
where ${\hat{H}}^{h}_{j \rightarrow i}$ is the estimation of $H^{h}_{j}$ at client $i$.
\end{assumption}

\begin{theorem}\label{theo:convergence}
   Suppose that Assumptions \ref{asp:L-smooth}, \ref{asp:bounded_grad}, \ref{asp:bounded_diff}{\color{black}, \ref{asp:lipschitz_encoder}, and \ref{asp:bounded_rep_error}} hold, if $\eta_{\tau} \le \min\{\frac{1}{LK},\frac{1}{\beta_i \sqrt{3(K-1)(K+1)}}\}$ and if $\eta_{\tau} = \frac{1}{\sqrt{T}}$, then for $T \ge \max\{L^2K^2, 3\beta_i^2 (K-1)(K+1)\}$, Algorithm \ref{alg:server} ensures:
\begin{align}
    &\min_{0 \le t\le T-1} \mathbb{E}[\|\nabla f_i(\bold{w}_i(\tau))\|^2]  \notag \\
    &\le \frac{2 (\mathbb{E}[f_i(\bold{w}_i(0))] - f_i^*)}{K \sqrt{T}} + \frac{\Phi( \beta_i,K) + \Psi( \beta_i,K,\chi_i) {\color{black} + \Gamma(\beta_i,K)}}{T},\notag
\end{align}
where $f_i^*$ is the optimal value of $f_i()$, $\Phi(\beta_i,K)=27(L^2+\beta_i^2)(K+1)M^2$, $\Psi(\beta_i,K,\chi_i)= (\frac{9(L^2+\beta_i^2)}{\beta_i^2(K-1)}+3\beta_i^2)\chi_i^2Q^2${\color{black}, and $\Gamma(\beta_i,K) = 54(L^2+\beta_i^2)(K+1)\rho_g^2\delta_H^2 + \frac{6\rho_g^2\delta_H^2}{K}$ represents the additional error term due to hidden representation estimation}.

Moreover, if $\eta_{\tau}$ satisfies $\sum_{\tau=0}^{\infty}\eta_{\tau} = \infty$, and $\sum_{\tau=0}^{\infty}\eta_{\tau}^3 < \infty$, and $\delta_H \to 0$ as $T \to \infty$, then we have $\lim_{\tau \rightarrow \infty} \mathbb{E}[\|\nabla f_i(\mathbf{w}_i(\tau))\|^2]=0$.
\end{theorem}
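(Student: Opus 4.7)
The plan is to derive a one-round descent inequality on $f_i(\mathbf{w}_i(t))$ and telescope it over $t = 0, \ldots, T-1$. Starting from Assumption~\ref{asp:L-smooth}, I would write $f_i(\mathbf{w}_i(t+1)) - f_i(\mathbf{w}_i(t)) \le \langle \nabla f_i(\mathbf{w}_i(t)), \Delta_t \rangle + \tfrac{L}{2}\|\Delta_t\|^2$, where the accumulated $K$-step displacement is $\Delta_t = -\eta_t \sum_{k=0}^{K-1} g_t^k$ with $g_t^k = \nabla f_i(\mathbf{w}_i^{(k)}(t)) + \beta_i(\mathbf{w}_i^{(k)}(t) - \mathbf{w}_i^G(t))$ the local ``personalized'' gradient evaluated at inner iterate $\mathbf{w}_i^{(k)}(t)$. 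Splitting $\langle \nabla f_i(\mathbf{w}_i(t)), g_t^k\rangle$ as $\|\nabla f_i(\mathbf{w}_i(t))\|^2$ plus a cross term, and applying Young's inequality, isolates a favorable $-\tfrac{K\eta_t}{2}\|\nabla f_i(\mathbf{w}_i(t))\|^2$ together with residuals controlled by $\sum_k \|g_t^k - \nabla f_i(\mathbf{w}_i(t))\|^2$. Those residuals decompose by the triangle inequality into an $L$-smoothness piece $L^2\|\mathbf{w}_i^{(k)}(t) - \mathbf{w}_i(t)\|^2$ and a regularization piece $\beta_i^2\|\mathbf{w}_i^{(k)}(t) - \mathbf{w}_i^G(t)\|^2$, with the latter further split into a local-drift term and the global--local mismatch, bounded by $\sigma_i^2 \eta_t^2 Q^2$ via Assumption~\ref{asp:bounded_diff}.

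The hard part is bounding the local drift $D_t^k := \|\mathbf{w}_i^{(k)}(t) - \mathbf{w}_i(t)\|^2$, which is self-referential because the regularizer in $g_t^j$ itself depends on earlier drift. Unrolling $\mathbf{w}_i^{(k)}(t) - \mathbf{w}_i(t) = -\eta_t \sum_{j<k} g_t^j$ and applying Cauchy--Schwarz, then invoking Assumption~\ref{asp:bounded_grad} on the gradient piece and the same mismatch split on the regularizer piece, produces an inequality of the form $D_t^k \le 2\eta_t^2 k \left( k M^2 + 2\beta_i^2 \sum_{j<k} D_t^j + 2\beta_i^2 k\, \sigma_i^2 \eta_t^2 Q^2 \right)$. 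Summing from $k=0$ to $K-1$, $\sum_k D_t^k$ appears on both sides with a coefficient of order $\beta_i^2 \eta_t^2 (K-1)(K+1)$; the stepsize restriction $\eta_t \le 1/(\beta_i \sqrt{3(K-1)(K+1)})$ stated in the theorem is chosen precisely to make that coefficient at most $1/3$, which lets me close the recursion and obtain explicit closed-form bounds whose constants match $\Phi(\beta_i, K)$ and $\Psi(\beta_i, K, \sigma_i)$.

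Substituting the drift bound back into the descent inequality, the second stepsize condition $\eta_t \le 1/(LK)$ ensures that $\tfrac{L}{2}\|\Delta_t\|^2$ does not overwhelm the favorable $-\tfrac{K\eta_t}{2}\|\nabla f_i(\mathbf{w}_i(t))\|^2$ coefficient, after which telescoping yields $\sum_{t=0}^{T-1} \tfrac{K\eta_t}{2}\|\nabla f_i(\mathbf{w}_i(t))\|^2 \le f_i(\mathbf{w}_i(0)) - f_i^* + O\bigl(\sum_t \eta_t^3\bigr)$. Upper-bounding the minimum by the average and inserting the constant stepsize $\eta_t = 1/\sqrt{T}$ then delivers the stated non-asymptotic bound, with the two error sources (bounded-gradient contribution and bounded-mismatch contribution) packaged into $\Phi$ and $\Psi$ respectively. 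For the asymptotic statement, the same telescoped inequality combined with $\sum_t \eta_t^3 < \infty$ shows $\sum_t \eta_t \|\nabla f_i(\mathbf{w}_i(t))\|^2 < \infty$; together with $\sum_t \eta_t = \infty$ this forces $\liminf_t \|\nabla f_i(\mathbf{w}_i(t))\|^2 = 0$, and a standard continuity argument using $L$-smoothness plus $\eta_t \to 0$ (implied by the summability condition) promotes the liminf to the stated limit.
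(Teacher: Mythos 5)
Your proposal follows essentially the same route as the paper's proof: the $L$-smoothness descent step, isolating $-\tfrac{K\eta_t}{2}\|\nabla f_i(\mathbf{w}_i(t))\|^2$ via the polarization identity, decomposing the residual into an $L^2$ drift piece, a $\beta_i^2$ drift piece, and the $\sigma_i^2\eta_t^2Q^2$ global--local mismatch, closing the drift bound with the stepsize condition $\eta_t\beta_i \le 1/\sqrt{3(K-1)(K+1)}$, and telescoping. The only difference is cosmetic --- you close the drift recursion by summing over $k$ and solving for $\sum_k D_t^k$, whereas the paper unrolls a per-step recursion $\varepsilon_i^k \le (1+\tfrac{1}{K})\varepsilon_i^{k-1}+\cdots$ using $(1+\tfrac{1}{K-1})^{K-1}<3$ --- so the argument is correct and matches the paper's approach.
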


\begin{proof}

Based on the L-smooth assumption, 
\begin{align}\label{eqn:L-smooth}
    &f_i(\bold{w}_i(\tau+1)) \notag \\
    \le  &f_i(\bold{w}_i(\tau)) + \langle \nabla f_i(\bold{w}_i(\tau)), \bold{w}_i(\tau+1)-\bold{w}_i(\tau) \rangle \notag \\
    &+ \frac{L}{2} \|\bold{w}_i(\tau+1)-\bold{w}_i(\tau)\|^2 \notag \\
    = &f_i(\bold{w}_i(\tau))+ \langle \nabla f_i(\bold{w}_i(\tau)), -K\eta_{\tau} \bold{d}_t \rangle 
    + \frac{LK^2\eta_{\tau}^2}{2} \|\bold{d}_t\|^2 \notag \\
    {=}&f_i(\bold{w}_i(\tau)) + \frac{LK^2\eta_{\tau}^2-K\eta_{\tau}}{2} \|\bold{d}_t\|^2-\frac{\eta_{\tau} K}{2}\|\nabla f_i(\bold{w}_i(\tau))\|^2\notag \\ 
    &+\frac{\eta_{\tau} K}{2}\underbrace{\|\nabla f_i(\bold{w}_i(\tau))-\bold{d}_t\|^2}_{A}, 
\end{align}
where $\bold{d}_t=[\bold{w}_i(\tau)-\bold{w}_i(\tau+1)]/\eta_{\tau} K$, and the last equality is obtained by using $-\langle a, b \rangle = -\frac{1}{2}(\|a\|^2-\|a-b\|^2+\|b\|^2)$. 

The key difference with hidden representation sharing is that the gradient computation depends on hidden representations, so we need to account for estimation errors. Denoting the gradient based on true hidden representations as $\nabla f_i(H^D_i, \bold{w}_i(\tau))$ and the gradient based on estimated hidden representations as $\nabla \hat{f}_i(\hat{H}^D_i, \bold{w}_i(\tau))$, we have:

\begin{align}
    & \nabla f_i(\bold{w}_i(\tau)) = \nabla f_i(H^D_i, \bold{w}_i(\tau))  \\
    & \bold{d}_t = \frac{1}{K}\sum_{k=0}^{K-1} \Big[\nabla \hat{f}_i(\hat{H}^D_i, \bold{w}_i^k(\tau)) + \beta_i(\bold{w}_i^k(\tau) - \bold{w}_i^G(\tau))\Big]
\end{align}

Now we bound $A$:
\begin{align}\label{eqn:A}
    &\|\nabla f_i(\bold{w}_i(\tau))-\bold{d}_t\|^2 \notag \\
     \overset{a_1}{\le} & \sum_{k=0}^{K-1} \frac{1}{K}\Big(\|\nabla f_i(\bold{w}_i(\tau))-\nabla \hat{f}_i(\hat{H}^D_i, \bold{w}_i^k(\tau))- \notag \\
     & \quad \beta_i(\bold{w}_i^k(\tau)-\bold{w}_i^G(\tau))\|^2\Big)\notag \\
      \overset{a_2}{\le} &\sum_{k=0}^{K-1} \frac{3}{K}\Big(\|\nabla f_i(H^D_i, \bold{w}_i(\tau))-\nabla \hat{f}_i(\hat{H}^D_i, \bold{w}_i^k(\tau))\|^2 \notag \\
      &+ \beta_i^2\|\bold{w}_i^k(\tau)-\bold{w}_i(\tau)\|^2\Big)+ 3\beta_i^2\|\bold{w}_i^G(\tau)-\bold{w}_i(\tau)\|^2 \notag \\
\end{align}

We can further decompose the first term to account for both parameter difference and hidden representation estimation error:

\begin{align}
    &\|\nabla f_i(H^D_i, \bold{w}_i(\tau))-\nabla \hat{f}_i(\hat{H}^D_i, \bold{w}_i^k(\tau))\|^2 \notag \\
    \leq &2\|\nabla f_i(H^D_i, \bold{w}_i(\tau))-\nabla f_i(H^D_i, \bold{w}_i^k(\tau))\|^2 \notag \\ 
    & + 2\|\nabla f_i(H^D_i, \bold{w}_i^k(\tau))-\nabla \hat{f}_i(\hat{H}^D_i, \bold{w}_i^k(\tau))\|^2 \notag \\
    \leq &2L^2\|\bold{w}_i^k(\tau)-\bold{w}_i(\tau)\|^2 + 2\rho_g^2\|H^D_i - \hat{H}^D_i\|^2.
\end{align}
where $\rho_g$ is the Lipschitz constant for the gradient with respect to hidden representations, and we've used Assumption 4.

From Assumption 5, we have:
\begin{align}
    \mathbb{E}\Big[\|H^D_i - \hat{H}^D_i\|^2\Big] \leq \delta_H^2
\end{align}

Substituting back:
\begin{align}
    A \overset{a_3}{\le}& \frac{3}{K} (2L^2+\beta_i^2) \underbrace{\sum_{k=0}^{K-1} \|\bold{w}_i^k(\tau)-\bold{w}_i(\tau)\|^2}_{B} + \frac{6\rho_g^2\delta_H^2}{K} + 3\eta_{\tau}^2 \beta_i^2 \chi_i^2 Q^2,
\end{align}

Now we will bound B:

\begin{align}
\varepsilon_i^k =&\|\bold{w}_i^k(\tau)-\bold{w}_i(\tau)\|^2\notag \\
=&\|\bold{w}_i^{k-1}(\tau)-\bold{w}_i(\tau) \notag\\
&-\eta_{\tau}\Big(\nabla \hat{f}_i(\hat{H}^D_i, \bold{w}_i^{k-1}(\tau))+\beta_i(\bold{w}_i^{k-1}(\tau)-\bold{w}_i^G(\tau))\Big)\|^2 \notag \\
\overset{b_1}{\le} & (1+\frac{1}{K}) \varepsilon_i^{k-1} \notag \\
&+ (K+1)\eta_{\tau}^2\|\nabla \hat{f}_i(\hat{H}^D_i, \bold{w}_i^{k-1}(\tau))+\beta_i(\bold{w}_i^{k-1}(\tau)-\bold{w}_i^G(\tau))\|^2 \notag\\
\end{align}

Taking expectation on both sides and accounting for hidden representation estimation error:

\begin{align}
\mathbb{E}[\varepsilon_i^k] \overset{b_2}{\le} & (1+\frac{1}{K}) \mathbb{E}[\varepsilon_i^{k-1}] + 3(K+1)\eta_{\tau}^2\mathbb{E}[\|\nabla \hat{f}_i(\hat{H}^D_i, \bold{w}_i^{k-1}(\tau))\|^2] \notag \\
& + 3(K+1)\eta_{\tau}^2\beta_i^2\Big(\mathbb{E}[\|\bold{w}_i^{k-1}(\tau)-\bold{w}_i(\tau)\|^2] \notag \\
 &+ \|\bold{w}_i(\tau)-\bold{w}_i^G(\tau)\|^2\Big) 
\end{align}

We can bound $\mathbb{E}[\|\nabla \hat{f}_i(\hat{H}^D_i, \bold{w}_i^{k-1}(\tau))\|^2]$ as:

\begin{align}
\mathbb{E}[\|\nabla \hat{f}_i(\hat{H}^D_i, \bold{w}_i^{k-1}(\tau))\|^2] \leq 2M^2 + 2\rho_g^2\delta_H^2.
\end{align}

Substituting this back:

\begin{align}
\mathbb{E}[\varepsilon_i^k] \overset{b_3}{\le} &(\frac{K+1}{K}+3(K+1)\eta_{\tau}^2 \beta_i^2) \mathbb{E}[\varepsilon_i^{k-1}] \notag \\
&+ 3(K+1)\eta_{\tau}^2(2M^2 + 2\rho_g^2\delta_H^2 + \beta_i^2\eta_{\tau}^2 \chi_i^2Q^2) \notag \\
\overset{b_4}{\le} & (1+\frac{1}{K-1})^{K-1} \times 3(K+1)\eta_{\tau}^2(2M^2 + 2\rho_g^2\delta_H^2 + \eta_{\tau}^2 \chi_i^2Q^2) \notag \\
\overset{b_5}{\le} &9(K+1)\eta_{\tau}^2(2M^2 + 2\rho_g^2\delta_H^2 + \eta_{\tau}^2 \chi_i^2Q^2),
\end{align}
where we've used $\eta_{\tau} \beta_i \leq \frac{1}{\sqrt{3(K-1)(K+1)}}$ and $(1+\frac{1}{x})^x < 3$ for $x > 0$.

Therefore:
\begin{align}\label{eqn:B}
    B=\sum_{k=0}^{K-1}\mathbb{E}[\epsilon_i^k] \leq 9K(K+1)\eta_{\tau}^2(2M^2 + 2\rho_g^2\delta_H^2 + \eta_{\tau}^2 \chi_i^2Q^2).
\end{align}

Substituting back to bound $A$:
\begin{align}
    \mathbb{E}[A] \leq & 54(L^2+\beta_i^2)(K+1)M^2\eta_{\tau}^2 + 54(L^2+\beta_i^2)(K+1)\rho_g^2\delta_H^2\eta_{\tau}^2 \notag \\
    & + \left(\frac{18(L^2+\beta_i^2)}{\beta_i^2(K-1)}+3\beta_i^2\right)\eta_{\tau}^2 \chi_i^2Q^2 + \frac{6\rho_g^2\delta_H^2}{K}
\end{align}

Let's define:
\begin{align}
\Phi(\beta_i,K) &= 54(L^2+\beta_i^2)(K+1)M^2 \\
\Psi(\beta_i,K,\chi_i) &= \left(\frac{18(L^2+\beta_i^2)}{\beta_i^2(K-1)}+3\beta_i^2\right)\chi_i^2Q^2 \\
\Gamma(\beta_i,K) &= 54(L^2+\beta_i^2)(K+1)\rho_g^2\delta_H^2 + \frac{6\rho_g^2\delta_H^2}{K}
\end{align}

Substituting into the original inequality and taking expectation, we get:
\begin{align}\label{eqn:f-1}
    & \mathbb{E}[f_i(\bold{w}_i(\tau+1))] \leq \mathbb{E}[f_i(\bold{w}_i(\tau))]  -\frac{\eta_{\tau} K}{2}\mathbb{E}[\|\nabla f_i(\bold{w}_i(\tau))\|^2] \notag \\
    &+\frac{\eta_{\tau}^3 K}{2}(\Phi(\beta_i,K) + \Psi(\beta_i,K,\chi_i)) + \frac{\eta_{\tau} K}{2}\Gamma(\beta_i,K)\eta_{\tau}^2
\end{align}

Rearranging and summing over $\tau = 0,1,...,T-1$:
\begin{align}
   & \min_{0 \leq \tau \leq T-1} \mathbb{E}[\|\nabla f_i(\bold{w}_i(\tau))\|^2]  
   \leq \frac{2 (\mathbb{E}[f_i(\bold{w}_i(0))] - f_i^*)}{K \sum_{\tau=0}^{T-1}\eta_{\tau}} \notag \\
   & + \frac{\sum_{\tau=0}^{T-1}{\eta_{\tau}^3}}{\sum_{\tau=0}^{T-1}\eta_{\tau}}(\Phi(\beta_i,K) + \Psi(\beta_i,K,\chi_i) + \Gamma(\beta_i,K)),\notag
\end{align}

where $f_i^*$ is the optimal value of $f_i()$.

If we set $\eta_{\tau} = \frac{1}{\sqrt{T}}$, then:
\begin{align}
    & \min_{0 \leq \tau \leq T-1} \mathbb{E}[\|\nabla f_i(\bold{w}_i(\tau))\|^2]  
    \leq \frac{2 (\mathbb{E}[f_i(\bold{w}_i(0))] - f_i^*)}{K \sqrt{T}} \notag \\
    & + \frac{\Phi(\beta_i,K) + \Psi(\beta_i,K,\chi_i) + \Gamma(\beta_i,K)}{T}. \notag
\end{align}

Moreover, if $\eta_{\tau}$ satisfies $\sum_{\tau=0}^{\infty}\eta_{\tau} = \infty$, and $\sum_{\tau=0}^{\infty}\eta_{\tau}^3 < \infty$, and $\delta_H \to 0$ as $T \to \infty$, then we have $\lim_{\tau \rightarrow \infty} \mathbb{E}[\|\nabla f_i(\mathbf{w}_i(\tau))\|^2]=0$.
\end{proof}

\section{Experiments}
\label{appendixb}
\subsection{Experimental Setups}
\textbf{Data Preprocessing:}
    \textcolor{black}{The raw charging records are processed as follows.
Missing values are filled by forward interpolation, and abnormal values beyond three standard deviations are clipped. Incomplete daily records are removed. Each station’s demand series is normalized using z-score.
To capture periodic patterns in charging behavior, calendar features are extracted and encoded as additional inputs, including hour of day, day of week, day of month, day of year, month, and year. 
After feature construction, the time series data are segmented using a sliding window mechanism. 
Specifically, a historical window of length $h$ is used to predict future demand, where $h=12$ for the step-1 forecasting task and $h=32$ for the step-6 forecasting task. 
The resulting samples are organized into mini-batches of size 32 and fed into the model using a data loader during training and evaluation.}

\textbf{Data Distribution and Heterogeneity Analysis:}
\textcolor{black}{To quantify spatial heterogeneity of EV charging demand across stations, we compute the average Jensen–Shannon (JS) distance between the empirical demand distributions of different stations within each dataset.
The results indicate clear distributional differences: Palo Alto exhibits the highest heterogeneity (average JS = 0.4206), followed by UrbanEV (JS = 0.2945), while Shenzhen is considerably more homogeneous (JS = 0.2585). The standard deviations of JS further confirm this pattern, with Shenzhen showing relatively concentrated distributional similarity among stations.
These statistics demonstrate that Shenzhen represents a more homogeneous urban charging environment, whereas UrbanEV and Palo Alto present substantially stronger spatial heterogeneity.}

\textbf{Implementation Details:}
    \textcolor{black}{Experiments were conducted for both single-step (H=1) and multi-step (H=6) forecasting horizons. We set batch size as 32 across all models. Learning rates were tuned based on dataset characteristics. For Palo Alto dataset, No\_FL, CNFGNN, PAG, GCRN, \textcolor{black}{FedProx, pFedMe} and PFGL utilized a rate of 0.0005, while LSTM-FedAvg, LSTM-CNN-FedAvg, and LSTM-CNN-PFL used 0.001. For Shenzhen and \textcolor{black}{UrbanEV} dataset, CNFGNN, PAG, GCRN, and PFGL employed 0.005, with the LSTM-based models again set to 0.001, \textcolor{black}{FedProx and pFedMe used 0.0005}. \textcolor{black}{For FedProx, the proximal coefficient is set to $0.001$. For pFedMe, we set personalization strength $0.001$ for the highly non-IID charging demand setting, inner-loop learning rate $0.0001$.}
    Input sequences were generated via a sliding window approach. The window size of 12 time steps was chosen for step-1 forecasting to capture immediate temporal dynamics, whereas a larger window of 32 steps was used for step-6 to incorporate longer-term dependencies potentially crucial for extended predictions.} 
For No\_FL, LSTM-CNN-FedAvg and LSTM-CNN-PFL, the kernel size of 1D Conv modules are both 64 and the Linear networks have 64 hidden units. There exists 2 layer LSTM module in LSTM-FedAvg, LSTM-CNN-FedAvg and LSTM-CNN-PFL, the hidden size of each layer are 32 and 64, respectively. \textcolor{black}{For other GNN-based federate learning methods,
they employs 1D Conv Encoder and Decoder have a kernel size of 64 (for step-1) and 128 (for step-6) to capture long temporal dependencies.  For the hyperparameters $\alpha$ and $\chi_i$, we adopted the following configurations: for Palo Alto dataset, in single-step prediction, $\alpha$=0.8, $\chi_i$=0.8.
In 6-step prediction, $\alpha$=0.9, $\chi_i$=0.9.
For Shenzhen and \textcolor{black}{UrbanEV} dataset, $\alpha$ = 0.9, $\chi_i$=0.4.} 
In addition, we set $\theta$ to  $\{0.1, 0.5, 0.9\}$ to comprehensively assess the model's capability in providing probabilistic predictions across different levels of confidence intervals. 
 To ensure model generalization and mitigate overfitting, we divided the dataset into a training set (60\%), a validation set (20\%), and a testing set (20\%).
We perform 100 communication rounds and select the model with minimal global validation loss as the final model. 

\subsection{\textcolor{black}{Influence of the Graph neural networks block}}
\textcolor{black}{Several existing works utilize graph-based methods for demand prediction tasks. To evaluate the effectiveness of employing GNN for capturing latent spatial-temporal features within our proposed personalized federated learning (PFL) framework, we conducted a comparative study. We integrated different state-of-the-art  spatial-temporal graph neural network architectures (CNFGNN, PAG, and GCRN) into a common PFL base framework (denoted CNFGNN-PFL, PAG-PFL, GCRN-PFL). We then compared their performance against our proposed model, PFGL. All experiments were performed without adversarial attacks. According to the result (Table. X), the proposed model achieves the best results on QS and MIL across both datasets, and also achieves the best ICP on the Palo Alto dataset \textcolor{black}{and UrbanEV dataset}. 
These results verify the effectiveness of the graph learning module embedded within PFGL. Despite leveraging similar underlying GNN structures, the spatial-temporal attention message passing in PFGL enables client to explicitly gather information from both current spatial neighbors and historical states via a temporal attention mechanism.  
This design allows it to better capture and exploit latent spatial-temporal dependencies compared to existing GNN-based PFL baselines. Moreover, the GraphSAGE block used in PFGL benefits from inductive representation learning, enabling generalization to unseen clients.}
\begin{table}[t]
\setlength{\abovecaptionskip}{0cm}
\setlength{\belowcaptionskip}{-0.6cm}
\centering
\vspace{-0.4cm}
\color{black}{
\caption{QS, MIL and ICP with different GNN-block on different dataset (Prediction Step=6)}
\scalebox{0.8}{
\scriptsize
\begin{tabular}{c c|ccc}
\hline \hline
\textbf{Dataset} & \textbf{Model} 
& \textbf{QS} & \textbf{MIL} & \textbf{ICP} \\
\hline
\multirow{3}{*}{Palo Alto} 
& CNFGNN-PFL & 1.3285 & 4.1295 & 0.7651 \\
& PAG-PFL & 1.6089 & 4.929 & 0.7398 \\
& GCRN-PFL & 1.4238 & 4.2057 & 0.7395 \\
& PFGL & \textbf{1.318} & \textbf{3.9813} & \textbf{0.7684} \\
\hline
\multirow{3}{*}{Shenzhen} 
& CNFGNN-PFL & 10.3131 & 9.4765 & 0.7018 \\
& PAG-PFL & 15.1751 & 25.733 & \textbf{0.7078} \\
& GCRN-PFL & 9.8595 & 9.4765 & 0.6927 \\
& PFGL & \textbf{9.4411} & \textbf{8.5866}& 0.7002\\
\hline 
\multirow{3}{*}{Urban} 
& CNFGNN-PFL & 3.4075 & 33.6296 & 0.7843 \\
& PAG-PFL & 3.469 & 32.5507 & 0.7327 \\
& GCRN-PFL & 3.3513 & 32.4954 & 0.7919 \\
& PFGL & \textbf{2.3077} & \textbf{23.5551}& \textbf{0.7944}\\
\hline\hline
\end{tabular}
}
}
\label{GNN_block_X}
\end{table}
\subsection{\textcolor{black}{Influence and Strategy for GNN over-smoothing Problem}}

\color{black}{GNN over-smoothing, characterized by the convergence of node representations towards similar values after multiple layers of message passing, presents a challenge. To mitigate the risk of over-smoothing,  we set GNN-layer as 3 and residual structure is employed in our model architecture. Specifically, the initial client representations are combined with the final output of the GNN layers through a residual connection, and the resulting features are then passed to the prediction stage. This mechanism allows initial client feature information to
propagate directly, preserving local specificity and counteracting the homogenizing effect of iterative
message passing.} 

\color{black}{To verify the effectiveness of these strategy, we have conduct some ablation experiments on Shenzhen dataset. Different GNN layer with/without residual structure. As shown in Table~\ref{residual}, increasing the number of layers to 4 or 5 without residual connections leads to significant performance degradation. For example, with 5 layers and no residual structure, the PFGL method yields a QS of 19.1585, MIL of 42.844. In contrast, our proposed residual-enhanced architecture maintains strong performance at 3 layers, achieving a good balance between expressiveness and stability.  These results confirm that residual connections are crucial for maintaining model effectiveness, particularly when exploring deeper GNNs, and validate our architectural choice of a 3-layer GNN with residuals in PFGL to prevent over-smoothing while preserving rich spatial-temporal representations.}
\begin{table}[]
\setlength{\abovecaptionskip}{0cm}
\centering
	\caption{Influence of the residual structure for GNN oversmooth (Prediction Step = 6)}
        \label{residual}
        \scalebox{0.8}{
	\begin{tabular}{l l c c c}
		\hline \hline
		         Layer num  & Method  & QS & MIL & ICP \\ 
		\hline
		\multirow{2}{*}{3} 
            & w/o residual & 9.9072    & 8.8304  & \textbf{0.7445}  \\          
		& PFGL &\textbf{9.4411} &\textbf{8.5866} &0.7002 \\	
        \hline
        \multirow{2}{*}{4} 
            & w/o residual & 17.4247    & 29.418  & \textbf{0.7153}  \\          
		& PFGL &\textbf{9.8542} &\textbf{9.0629} &0.6946 \\
        \hline
        \multirow{2}{*}{5} 
            & w/o residual  &19.1585   &42.844  &0.7312\\
            & PFGL &\textbf{13.3638} &\textbf{18.6912} &\textbf{0.7046} \\
		\hline \hline
	\end{tabular}}
\end{table}

\subsection{\textcolor{black}{Influence of the channel weight $\alpha$}}
\textcolor{black}{In this paper, parameter $\alpha$, termed the 'channel weight' in Eq. (\eqref{channel_weight}), plays a crucial role in balancing the influence of two distinct information sources when determining the personalized aggregation weights ($\lambda_{i}$) for each client $i$. Specifically, it balances the similarity weighted adjacency matrix based on model similarity ($\xi$) and the spatial weighted matrix based on spatial similarity ($\zeta$). When $\alpha$ approaches 1, the aggregation weight $\xi_i$ is dominated by original $\xi_i$, meaning that clients with more similar models contribute more significantly to each other’s updates, irrespective of their physical location.
Conversely, when $\alpha$ approaches 0, the aggregation weight $\xi_i$ relies primarily on $\zeta$, giving greater influence to clients that are spatially closer, regardless of their current model similarity.
To investigate the optimal balance between these two factors, experiments were conducted on the Shenzhen dataset, varying $\alpha$ across the set $\{0, 0.2, 0.5, 0.9, 1\}$. The performance, measured by QS, MIL, and ICP along with the percentage improvement ('Improv.') over the FedAvg-GraphSAGE baseline, is presented in Table \ref{alpha}. 
Therefore, we choose the $\alpha=0.9$, which can balance the spatial relationship and model similarity.}
\begin{table}[htbp]
\centering
\caption{Performance under different channel weights for Shenzhen dataset (Prediction Step = 6)}
\scalebox{0.8}{
\begin{tabular}{c|cc|cc|cc}
\hline \hline
\textbf{Method} & \multicolumn{2}{c|}{\textbf{QS}} & \multicolumn{2}{c|}{\textbf{MIL}} & \multicolumn{2}{c}{\textbf{ICP}} \\
 & Value & Improv. & Value & Improv. & Value & Improv. \\
 \hline
channel\_weight = 0   & 9.5661 & 51.5\% & 8.9963 & 51\% & 0.7176 & 51.5\% \\
channel\_weight = 0.2 & 9.5774 & 79\% & 8.6931 & 52\% & 0.7164 & 51\% \\
channel\_weight = 0.5 & 9.5798 & 75\% & 8.9568 & 47.5\%  & \textbf{0.7202} & 53.5\% \\
channel\_weight = 0.9 & \textbf{9.4411} & \textbf{84\%} & \textbf{8.5866} & \textbf{63\% }& 0.7002 & \textbf{56.5\%} \\
channel\_weight = 1   & 9.5057 & \textbf{84\%} & 8.7907 & 59.5\% & 0.7001 & \textbf{56.5\%} \\
\hline \hline
\end{tabular}}
\label{alpha}
\end{table}
\subsection{Influence of the increasing number of EV charging stations}
To evaluate the scalability of our approach, we increased the number of participating charging stations in both the Palo Alto (Table \ref{number}),  Shenzhen (Table \ref{number2}) and \textcolor{black}{UrbanEV (Table \ref{number3}) datasets}. 
In this ablation study, all stations in the test set were also seen during training, no unseen stations were introduced to focus on scalability. 

The results consistently demonstrate that forecasting accuracy improves as more clients participate. Specifically, for the Palo Alto dataset, increasing the number of stations leads to lower QS loss and an ICP value that approaches 0.8. A similar trend is observed in the Shenzhen \textcolor{black}{and UrbanEV dataset}, where prediction errors steadily decline with the inclusion of more clients. These consistent improvements across multiple evaluation metrics (QS, MIL) suggest that the federated model effectively exploits the broader data distribution from a larger client pool, resulting in a more accurate and generalized global model.

\begin{table}[]
\setlength{\abovecaptionskip}{0cm}
\setlength{\belowcaptionskip}{-0.6cm}
\centering
\vspace{-0.4cm}
	\caption{QS, MIL, ICP for increasing number of EV charging stations for Palo Alto dataset (Prediction Step=6)}
        \label{number}
        \scalebox{0.8}{
	\begin{tabular}{@{}c| c c c| c c c@{}}
		\hline \hline
		\multirow{2}{*}{Num of Stations} &  & WEBSTER &  &  &MPL & \\ 
            \multirow{1}{*}{}
                &QS &MIL &ICP & QS &MIL &ICP \\
            \hline
		\multirow{2}{*}{}
            2 & 1.4733 &4.2793 &0.7088 & 1.1407 & \textbf{3.5261} &0.7264 \\
	      8 & \textbf{1.3768}  &\textbf{4.1713} &\textbf{0.7749} &\textbf{1.1325} &3.5485 &\textbf{0.7783} \\
		\hline \hline
	\end{tabular}}
\end{table}

\begin{table}[htbp]
\setlength{\abovecaptionskip}{0cm}
\setlength{\belowcaptionskip}{-0.6cm}
\centering
\vspace{-0.4cm}
\caption{QS, MPL, and ICP for increasing number of EV charging stations on Shenzhen dataset (Prediction Step = 6)}
\label{number2}
\scalebox{0.8}{
\begin{tabular}{@{}c|cc|cc|cc@{}}
\hline \hline
\multirow{2}{*}{Num of Stations} & \multicolumn{2}{c|}{QS} & \multicolumn{2}{c|}{MPL} & \multicolumn{2}{c}{ICP} \\
 & Mean & Std & Mean & Std & Mean & Std \\
\hline
50 & 10.2764 & \textbf{22.8216} & 22.3021 & 42.8743 & 0.7648 & \textbf{0.1106} \\
100 & 12.2052 & 37.9423 & 17.3024 & 39.7369 & 0.7518 & 0.1624 \\
200 & \textbf{7.8118} & 27.5722 &\textbf{12.6685}& \textbf{31.5399} & \textbf{0.7401 }& 0.1471 \\
\hline \hline
\end{tabular}}
\end{table}

\begin{table}[htbp]
\setlength{\abovecaptionskip}{0cm}
\setlength{\belowcaptionskip}{-0.6cm}
\centering
\color{black}{
\vspace{-0.4cm}
\caption{QS, MPL, and ICP for increasing number of EV charging stations on UrbanEV dataset (Prediction Step = 6)}
\label{number3}
\scalebox{0.8}{
\begin{tabular}{@{}c|cc|cc|cc@{}}
\hline \hline
\multirow{2}{*}{Num of Stations} & \multicolumn{2}{c|}{QS} & \multicolumn{2}{c|}{MPL} & \multicolumn{2}{c}{ICP} \\
 & Mean & Std & Mean & Std & Mean & Std \\
\hline
125 & 2.3184 & 7.7567 & \textbf{21.696} & \textbf{68.7079} & 0.8357 & \textbf{0.1056} \\
250 & \textbf{2.3077} & \textbf{7.14} & 23.5551 & 72.4787 & \textbf{0.7944} & 0.1077 \\
\hline \hline
\end{tabular}}}
\end{table}
\appendices

\end{document}